\def\j{\alpha_j}
\def\be{\begin{equation}}
\def\ee{\end{equation}}
\def\bea{\begin{eqnarray}}
\def\eea{\end{eqnarray}}
\newcommand{\beas}{\begin{eqnarray*}}
\newcommand{\eeas}{\end{eqnarray*}}
\newcommand{\nn}{\nonumber \\}
\newcommand{\nb}{\nonumber}
\newtheorem{theorem}{\sf THEOREM}
\newtheorem{definition}{Definition}[section]
\newtheorem{lemma}[theorem]{Lemma}
\newtheorem{prop}[theorem]{Proposition}
\def\abs#1{\left| #1\right|}
\def\Tr{\mathop{\rm Tr}}
\def\Sl{\sum\limits}
\def\Label#1{\label{#1}%
  \smash{\hbox to0pt{\raise1ex\hbox{\tiny[#1]}\hss}}}
\title{Off-shell BCJ Relation in Nonlinear Sigma Model}
\author[a,b]{Gang Chen,}
\author[c]{Shuyi Li}
\author[d]{and Hanqing Liu}
\affiliation[a]{Department of Physics, Zhejiang Normal University 688 Yingbin Road, Jinhua 321004, China}
\affiliation[b]{Department of Physics, Nanjing University 22 Hankou Road, Nanjing 210093, China}
\affiliation[c]{Department of Physics, Brown University, Providence, RI 02906, USA}
\affiliation[d]{Department of Physics, Duke University, Durham, NC 27708, USA}
\emailAdd{gang.chern@gmail.com}
\emailAdd{shuyi\_li@brown.edu}
\emailAdd{hanqing.liu@duke.edu}
\date{\today}
\abstract{We investigate relations among tree-level off-shell currents in nonlinear sigma model. Under Cayley parametrization, we propose and prove a general revised BCJ relation for even-point currents. Unlike the on-shell BCJ relation, the off-shell one behaves quite differently from Yang-Mills theory although the algebraic structure is the same. After performing the permutation summation in the revised BCJ relation, the sum is non-vanishing, instead, it equals to the sum of sub-current products with the BCJ coefficients under a specific ordering, which is presented by an explicit formula. Taking on-shell limit, this identity is reduced to the on-shell BCJ relation, and thus provides the full off-shell correspondence of tree-level BCJ relation in nonlinear sigma model.}
\keywords{BCJ relation, nonlinear sigma model, off-shell}
\begin{document}
\maketitle
\section{Introduction}
Discovering new amplitude relations is one of the significant tasks in scattering amplitudes in recent years. A celebrated inspection that there is a duality between color factors and kinetic factors in Yang-Mills theory \cite{Bern:2008qj} made by Bern, Carrasco and Johansson implies relations between color-ordered amplitudes at tree level. In \cite{Bern:2008qj}, the authors pointed out that the scattering amplitudes in Yang-Mills theory can be expressed by Feynman-like diagrams with only cubic vertices and thus the duality is established. Corresponding to the algebraic properties of the color factors, namely antisymmetry and Jacobi identity, we have KK relation \cite{Kleiss:1988ne} and BCJ relation \cite{Bern:2008qj} respectively. With KK relation, the number of independent color-ordered $n$-point tree-level amplitudes is reduced to $(n-2)!$, while with BCJ relation, it can be further reduced to $(n-3)!$.

The tree-level amplitude relations in Yang-Mills theory have been studied in both string theory \cite{BjerrumBohr:2009rd,Stieberger:2009hq} and field theory. In field theory, KK relation was first proven by new color decompositions \cite{DelDuca:1999rs}, then both KK and BCJ relations was proven by BCFW recursion \cite{Britto:2004ap, Britto:2005fq,Feng:2010my, Tye:2010kg, Cachazo:2012uq, Chen:2011jxa}. The kinematic factors in Yang-Mills theory can be constructed from pure-spinor string theory \cite{Mafra:2011kj}, from an area-preserving diffeomorphism algebra \cite{Monteiro:2011pc,BjerrumBohr:2012mg} or a more general diffeomorphism algebra \cite{Fu:2012uy}. They can also be understood through the construction of color-dual decomposition and trace-like objects \cite{Bern:2011ia,Du:2013sha,Fu:2013qna}, relabeling symmetry \cite{Broedel:2011pd,Fu:2014pya,Naculich:2014rta} and scattering equations \cite{Cachazo:2013gna, Cachazo:2013hca, Cachazo:2013iea,Naculich:2014rta,Naculich:2014naa}.

Many generalization of KK relation and BCJ relation has also been made. One direction is to study this duality at loop level \cite{Bern:2010yg, BjerrumBohr:2011xe, Carrasco:2011mn, Boels:2011tp,Boels:2011mn,  Carrasco:2012ca, Bern:2012uf, Du:2012mt, Oxburgh:2012zr, Saotome:2012vy, Boels:2012ew, Carrasco:2012ca, Boels:2013bi, Bjerrum-Bohr:2013iza, Bern:2013yya,  Nohle:2013bfa}. Another direction is to see whether these relations have analogues in other theories. It is not surprising that they hold in most Yang-Mills-like theories, such as gauge theory coupled with matter \cite{Sondergaard:2009za}, $\mathcal{N}=4$ super Yang-Mills theory \cite{Jia:2010nz} and color-scalar amplitudes \cite{Du:2011js}, whose color factors share the same algebraic properties. While the algebraic properties change, as is the case of ABJM theory with 3-algebra \cite{Bargheer:2012gv}, the amplitude relations also change. Besides those fundamental theories, it is worthwhile to investigate some effective theories with the same  algebra as Yang-Mills theory, such as the nonlinear sigma model with $SU(N)$, which is governed by the chiral Lagrangian and describes the phenomenological behavior of the Goldstone bosons under the chiral symmetry breaking $SU(N)_L\times SU(N)_R\rightarrow SU(N)$. There are many progress in the amplitude of  nonlinear sigma model recently \cite{Kampf:2012fn, Kampf:2013vha, Chen:2013fya, Cachazo:2014xea, Chen:2014dfa, Du:2015esa, Carrasco:2016ldy, Du:2016tbc}.

In \cite{Ma:2011um}, it was pointed out that in field theory, all the on-shell KK relations and on-shell general BCJ relations can be generated through two primary relations: the fundamental BCJ relation and cyclic symmetry. Since nonlinear sigma model at tree level satisfies both primary relations, the on-shell KK and general BCJ relations, which are exactly the same as the counterparts in Yang-Mills theory, are also guaranteed \cite{Chen:2013fya}. Based on this result the fundamental BCJ relation with one external line off-shell is also proven \cite{Chen:2013fya}. It no longer shares the same formulae as in Yang-Mills theory, however.

In this paper, continuing the proof of the KK relation \cite{Chen:2014dfa}, we will further prove the general BCJ relation in tree-level nonlinear sigma model with one external line off-shell, which is closely based on the previous results \cite{Chen:2013fya,Chen:2014dfa}. We use Berends-Giele recursion relation under Cayley parametrization to calculate each diagram. As is shown in \cite{Kampf:2012fn,Kampf:2013vha}, all odd-point currents have to vanish, therefore we only need to consider even-point currents $\mathcal{J}(\sigma)$ in nonlinear sigma model. We conjecture and prove the general BCJ relation in this case. 

This paper is organized as following. We first provide the Feynman rules and Berends-Giele recursion relation in nonlinear sigma model in Section \ref{PPP}. Then we introduce the revised BCJ relation and prove that it is equivalent to the general BCJ relation in Section \ref{Sec:RBCJ}. In Section \ref{EX}, we verify the revised BCJ relation by calculate a six point example directly. Finally in Section \ref{PF}, we prove the revised BCJ relation by Berends-Giele recursion relation. 

\section{Preparation: Feynman rules, Berends-Giele recursion and $U(1)$ identity}\label{PPP}
In this section, we review the Feynman rules and the Berends-Giele recursion in nonlinear sigma model which are useful through this paper.\footnote{There is an overlap with the section 2 of \cite{Chen:2013fya, Chen:2014dfa}.} Most of the notations follow the recent papers \cite{Kampf:2012fn,Kampf:2013vha}. 

\subsection{Feynman rules}
\noindent {\textbf {Lagrangian}}

\noindent The Lagrangian of $U(N)$ nonlinear sigma model is
\bea
\mathcal{L}={F^2\over 4}\Tr (\partial_{\mu}U\partial^{\mu}U^{\dagger}),
\eea
where $F$ is a constant. As shown in \cite{Kampf:2012fn,Kampf:2013vha}, $U$ is defined by
\bea
U=1+2\Sl_{n=1}^{\infty}\left({1\over 2F}\phi\right)^n,~~~~\label{Cayley}
\eea
where $\phi=\sqrt{2}\phi^at^a$ and $t^a$ are generators of $U(N)$ Lie algebra.

\noindent{\textbf {Trace form of color decomposition}}

\noindent The full tree amplitudes can be given in terms of color-ordered amplitudes by trace form of color decomposition
\bea
M(1^{a_1},\dots,n^{a_n})=\Sl_{\sigma\in S_{n-1}}\Tr(T^{a_{1}}T^{a_{\sigma_2}}\dots T^{a_{\sigma_n}})A(1,\sigma).\label{Trace form}
\eea
Since the traces have cyclic symmetry, the color-ordered amplitudes also satisfy cyclic symmetry
\bea
A(1,2,\dots,n)=A(n,1,\dots,n-1).\label{Cyclic symmetry}
\eea

\noindent \textbf{Feynman rules for color-ordered amplitudes}

\noindent Vertices in color-ordered Feynman rules under Cayley parametrization \eqref{Cayley} are
\bea
V_{2n+1}=0, V_{2n+2}=\left(-{1\over 2F^2}\right)^n\left(\Sl_{i=0}^np_{2i+1}\right)^2=\left(-{1\over 2F^2}\right)^n\left(\Sl_{i=0}^np_{2i+2}\right)^2,\Label{Feyn-rules}
\eea
where $p_j$ denotes the momentum of the leg $j$. Momentum conservation has been considered.

\subsection{Berends-Giele recursion}
In the Feynman rules given above, one can construct tree-level currents\footnote{In this paper, an $n$-point current is mentioned as the current with $n-1$ on-shell legs and one off-shell leg.} with one off-shell line through Berends-Giele recursion
\bea
&&J(2,...,n)\nn
&=&\frac{i}{P_{2,n}^2}\Sl_{m=4}^n\Sl_{1=j_0<j_1<\cdots<j_{m-1}=n}i V_{m}(p_1=-P_{2,n},P_{j_0+1,j_1},\cdots,P_{j_{m-2}+1,n})\times\prod\limits_{k=0}^{m-2} J(j_k+1,\cdots,j_{k+1}),\Label{B-G}\nn
\eea
where $p_1=-P_{2,n}=-(p_2+p_3+\dots+p_n)$. The starting point of this recursion is $J(2)=J(3)=\dots=J(n)=1$.

\noindent There is at least one odd-point vertex for current with odd-point lines (including the off-shell line) and the odd-point vertices are zero. As a result, we have
\bea
J(2,\dots,2m+1)=0,
\eea
for $(2m+1)$-point amplitudes.
The currents with even points in general are nonzero and are built up by only odd numbers of even-point sub-currents. Since odd-point currents have to vanish,
in all following sections of this paper, we just need to discuss on the relations among even-point currents.

\section{Off-shell General BCJ relation and Revised BCJ relation}\label{Sec:RBCJ}
The on-shell case of general BCJ relation, which is guaranteed as is explained in the introduction, has the following form,
\begin{equation}\label{}
  \Sl_{\sigma\in OP(\mathbf{\alpha_r}\bigcup\mathbf{\beta_s})}\left(\Sl_{i=1}^r\Sl_{\xi_{\sigma_k}<\xi_{\alpha_i}}s_{\alpha_i\sigma_k}\right)\mathcal{A}(1,\{\sigma\},2m)=0,\footnote{$\mathbf{\alpha_r}$ is an abbreviation for $\alpha_1,\cdots,\alpha_r$.}\Label{on-shell-gen-BCJ}
\end{equation}
where $OP(\mathbf{\alpha_r}\bigcup\mathbf{\beta_s})$ means ordered permutation among two sets $\mathbf{\alpha_r}$ and $\mathbf{\beta_s}$, and $\xi_{\alpha_i}$ denotes the position of $\alpha_i$ in the ordered permutation $\sigma$.
As for the off-shell case, the value of this expression, which can be conveniently denoted by $\mathcal{A}_{BCJ}(r,s)$, is no longer zero.

In order to prove the off-shell general BCJ relation, it is convenient to introduce a new relation which is equivalent to the general BCJ relation.
\begin{theorem}
(Revised BCJ relation)
\begin{equation}\label{RBCJ}
  \begin{split}
  \mathcal{A}_{RBCJ}(r,s)\equiv&\Sl_{\sigma\in OP(\mathbf{\alpha_r}\bigcup\mathbf{\beta_s})}\left(\Sl_{i=1}^r\Sl_{\xi_{\sigma_k}<\xi_{\alpha_i}}s_{\alpha_i\sigma_k}\right)\mathcal{A}(1,\{\sigma\})\\
  =&\Sl_{div\{\mathbf{\alpha_r},\mathbf{\beta_s}\}}   \left(\frac{1}{2F^2}\right)^{\frac{R+S-1}{2}}p_1^2S_{div\{\mathbf{\alpha_r},\mathbf{\beta_s}\}}\mathcal{J}(A_{1})\cdots \mathcal{J}(A_R)\mathcal{J}(B_{1})\cdots \mathcal{J}(B_S),
  \end{split}
\end{equation}
where $\{A_1, \cdots, A_R, B_1,\cdots,B_S\}$is an element in $div\{\mathbf{\alpha_r},\mathbf{\beta_s}\}$. The coefficients $S_{div\{\mathbf{\alpha_r},\mathbf{\beta_s}\}}$ is a function of ${div\{\mathbf{\alpha_r},\mathbf{\beta_s}\}}$, and can be decided by the following diagrams Fig.\ref{S_{div}}.
\begin{figure}[htbp]
  \centering
  \subfloat[$R-S=1$]{
  \label{Fig_R-S=1}
    \begin{minipage}[t]{0.5\textwidth}
      \centering
      \includegraphics[width=6cm]{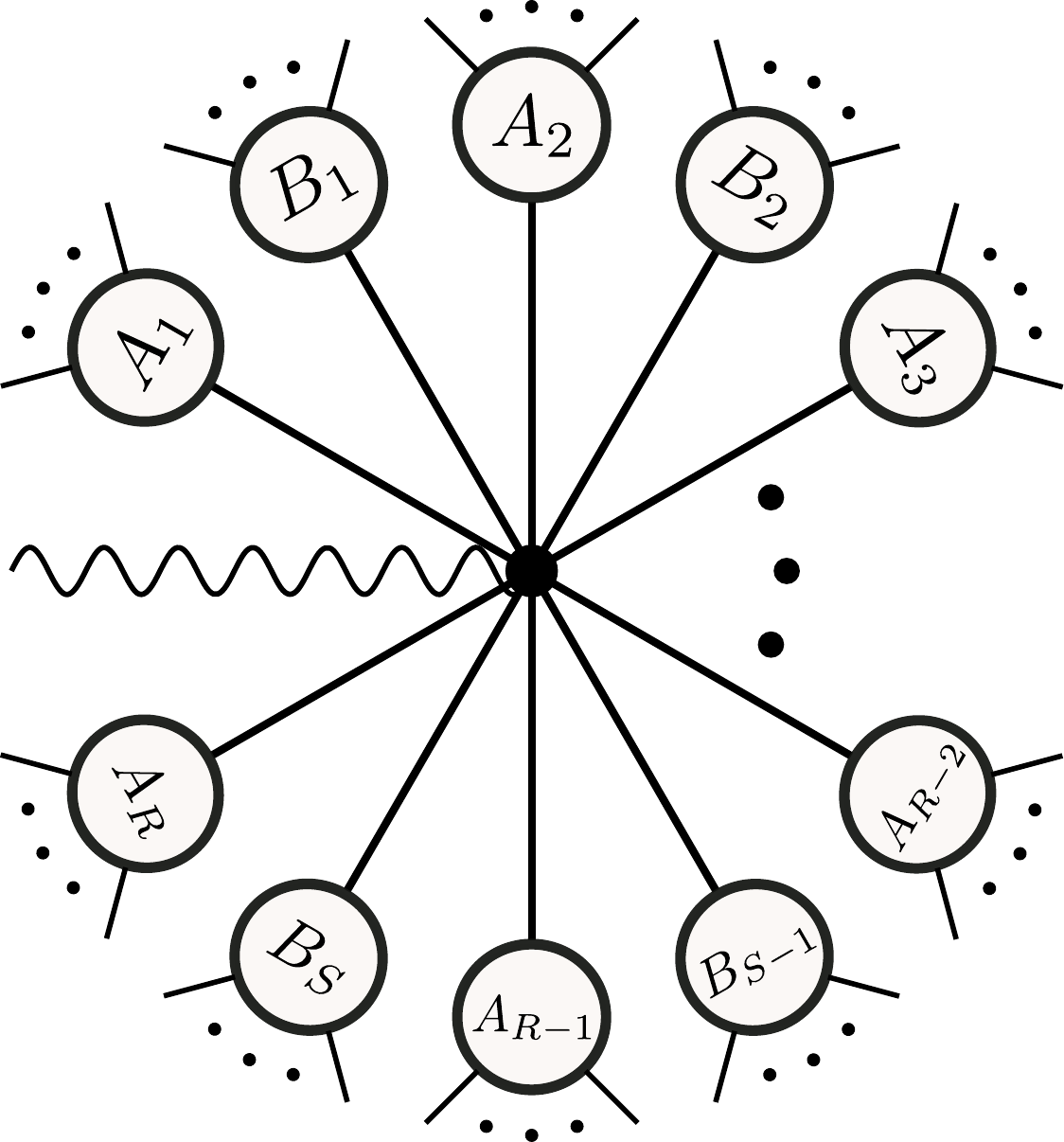}
    \end{minipage}
  }
  \subfloat[$R-S=-1$]{
  \label{Fig_R-S=-1}
    \begin{minipage}[t]{0.5\textwidth}
      \centering
      \includegraphics[width=6cm]{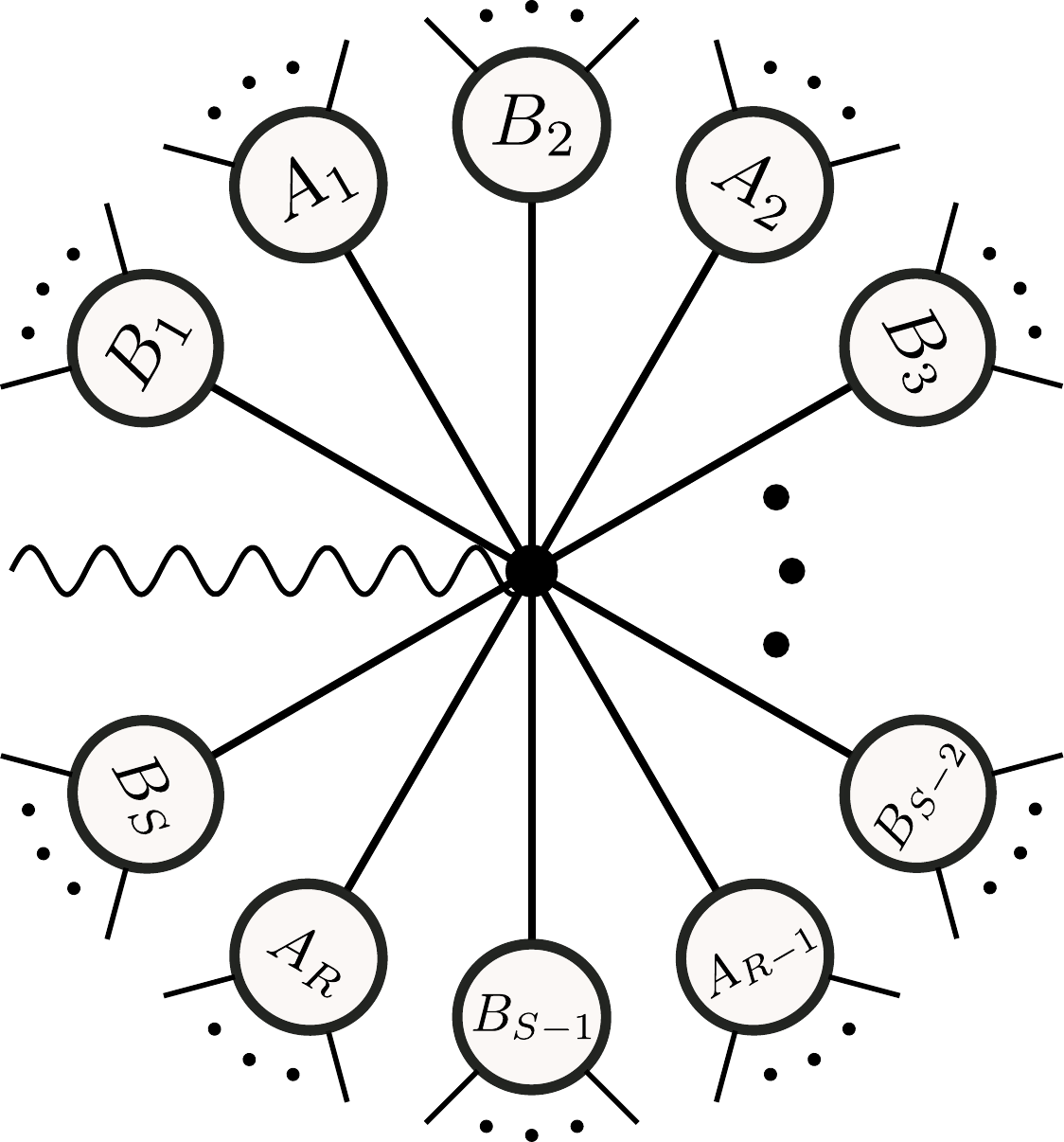}
  \end{minipage}
  }
\caption{Schematic diagram representations to read the coefficients  $S_{div\{\mathbf{\alpha_r},\mathbf{\beta_s}\}}$.}
\label{S_{div}}
\end{figure}

The rule to read the coefficients is the same as the general BCJ relation, that is, for every $\alpha_i$ and for every term $\sigma_k$ before $\alpha_i$ we have a coefficient $s_{\alpha_i\sigma_k}$.
\end{theorem}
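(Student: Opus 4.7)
The plan is to prove the Revised BCJ relation \eref{RBCJ} by induction on the total number of points $n = r + s + 1$, using the Berends-Giele recursion \eref{B-G} as the workhorse. The base case is the off-shell fundamental BCJ relation ($r = 1$), which was established in \cite{Chen:2013fya}; the general case will be reduced to lower-point instances through recursive decomposition of $\mathcal{J}(1, \sigma)$.

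First I would substitute the Berends-Giele expansion of $\mathcal{J}(1, \{\sigma\})$ into the left-hand side $\mathcal{A}_{RBCJ}(r, s)$, producing a sum over even vertex sizes $m$ and over ways to split $\sigma$ into $m-1$ ordered blocks attached to an $m$-point vertex $V_m$ (with leg $1$ playing the distinguished role); odd-point vertices drop out by \eref{Feyn-rules}. I would then interchange the summations so that the outer sum is over the vertex size and the block pattern, while the inner sum runs over all orderings $\sigma \in OP(\mathbf{\alpha_r} \cup \mathbf{\beta_s})$ compatible with that pattern. The key observation is that the BCJ weight $\sum_i \sum_{\xi_{\sigma_k} < \xi_{\alpha_i}} s_{\alpha_i \sigma_k}$ splits into \emph{intra-block} contributions (between legs in the same sub-current) and \emph{inter-block} contributions (between legs lying in different sub-currents), which can be analyzed separately.

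The central combinatorial step is to process these two pieces. For blocks that contain both $\alpha$'s and $\beta$'s, the intra-block weight is exactly a smaller instance of $\mathcal{A}_{RBCJ}$ which, by the induction hypothesis, reorganizes into a sum of sub-current products weighted by $S_{div}$ coefficients; for pure blocks the intra-block weight trivially yields the corresponding single $\mathcal{J}$. Combining these with the inter-block BCJ weights applied to the remaining legs should reorganize precisely into the claimed sum over $div\{\mathbf{\alpha_r}, \mathbf{\beta_s}\}$. The prefactor $\left(-\tfrac{1}{2F^2}\right)^{(R+S-1)/2}$ accumulates from the $V_m$ vertex factors, while the $p_1^2$ on the right-hand side arises from the kinematic structure of $V_m$ in \eref{Feyn-rules} acting against the outermost propagator $1/P_{2,n}^2$ in \eref{B-G}.

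The main obstacle I anticipate is the bookkeeping of the BCJ coefficients across the recursion: the weight $s_{\alpha_i \sigma_k}$ is a global function of the position of $\alpha_i$ in $\sigma$, so after splitting $\sigma$ into ordered sub-blocks these Mandelstam factors must be carefully redistributed, using momentum conservation within each sub-current to convert sums like $\sum_k s_{\alpha_i \sigma_k}$ into the quadratic kinematic factor sitting in $V_m$. This redistribution, together with verifying that only the alternating $A$/$B$ divisions survive in the final reorganization (matching the diagrammatic rule of Figure \ref{S_{div}} with $R - S = \pm 1$ and no two consecutive blocks of the same type), is where the work concentrates. The equivalence with the on-shell general BCJ relation \eref{on-shell-gen-BCJ} then follows immediately by sending $p_1^2 \to 0$, which kills every term on the right-hand side.
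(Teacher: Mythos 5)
Your plan reproduces, in outline, the first half of the paper's actual proof: expand $\mathcal{J}(1,\{\sigma\})$ via Berends--Giele \eref{B-G}, interchange the sums, split the BCJ weight into intra-block pieces and an inter-block piece $\bar S$, and feed the intra-block sums into the generalized $U(1)$-decoupling identity and the recursion hypothesis for smaller instances of \eref{RBCJ}. That part is sound (the paper additionally spends a subsection equipping the divisions with a partial order and tracking totally ordered paths in a Hasse diagram to guarantee that the contributions flowing from coarser to finer divisions are complete and not repetitive -- the ``bookkeeping'' you mention is real work, but it is doable along your lines).

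The genuine gap is that this recursive reorganization only fixes the coefficients of divisions that still contain nontrivial sub-currents. It does not determine the coefficient $\mathcal{VS}_{0'}$ of the finest division $\mathcal{J}(\alpha_1)\cdots\mathcal{J}(\alpha_r)\mathcal{J}(\beta_1)\cdots\mathcal{J}(\beta_s)$, which is precisely where the content of the theorem lives: the staircase rule of Fig.\ref{S_{div}}, i.e.\ $S_0=\sum_{i\le j}s_{\alpha_i\beta_j}$ or $\sum_{i<j}s_{\alpha_i\beta_j}$ depending on $R-S=\pm1$, and the vanishing for $|r-s|\neq 1$. Nothing in the Berends--Giele plug-in plus induction hypothesis forces that term to have this form, and your proposal offers no mechanism to compute it. The paper needs three extra ingredients here: (i) the on-shell relation \eref{on-shell-gen-BCJ} to argue the no-sub-current term carries an overall $p_1^2$; (ii) a lemma that $\mathcal{A}_{RBCJ}^\mathrm{II}$ contains only $s_{\alpha_i\beta_j}$ (no $s_{\alpha_i\alpha_j}$ or $s_{\beta_i\beta_j}$), reducing the unknown to a matrix $\mathbf{C}(r,s)$; and (iii) a double-counting argument -- permuting $\alpha_1$ among the remaining $\alpha$'s, and separately $\beta_1$ among the $\beta$'s, and evaluating the same sum from two viewpoints -- which produces the linear systems (\ref{R}) and (\ref{S}) with inhomogeneous data $\boldsymbol{\mathrm{D}}^{(r-s)}(r,s)$ computed in the appendix. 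Even then the solution is fixed only up to the one-parameter family (\ref{g}), and a separate diagrammatic lemma showing $c^{r1}=0$ is required to eliminate it. Without an argument of this type, your induction cannot close, because the finest-division coefficient at level $(r,s)$ is itself an input to the induction hypothesis at the next level.
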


Take the on-shell limit $p_1^2\rightarrow0$, and we arrive at the on-shell general BCJ relation (\ref{on-shell-gen-BCJ}).

Equivalently, we can use the condition of momentum conservation to change coefficients of both sides from $\Sl_{\xi_{\sigma_k}<\xi_{\alpha_i}}s_{\alpha_i\sigma_k}$ to $-\Sl_{\xi_{\sigma_k}>\xi_{\alpha_i}}s_{\alpha_i\sigma_k}$\footnote{We will keep this convention for simplicity in the rest of this paper.}. Our next task is to prove the equivalence between the revised BCJ relation and the general BCJ relation.

\begin{prop}
The revised BCJ relation is necessary and sufficient to the general BCJ relation.
\end{prop}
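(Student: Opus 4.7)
My strategy is to establish both directions of the equivalence through a single combinatorial decomposition: partition the revised BCJ sum by the identity of the rightmost leg of $\sigma$ and match each class of terms to a general BCJ expression plus a controlled off-shell remainder. The manipulation is cleanest in the $-\sum_{\xi_{\sigma_k}>\xi_{\alpha_i}}s_{\alpha_i\sigma_k}$ convention introduced right after the theorem, since any term whose rightmost element is some $\alpha_i$ then contributes an empty truncated sum for that index, drastically simplifying the bookkeeping.

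Concretely, I would split $\sum_{\sigma\in OP(\alpha_r\cup\beta_s)}$ into two pieces: (i) terms whose rightmost slot is occupied by some $\beta_j$, and (ii) terms whose rightmost slot is occupied by some $\alpha_i$. In case (i), the remaining legs form an ordered permutation $\sigma'\in OP(\alpha_r\cup(\beta_s\setminus\{\beta_j\}))$, and the coefficient of $\mathcal{A}(1,\sigma',\beta_j)$ is exactly the off-shell general BCJ coefficient with $\beta_j$ playing the role of the fixed final leg $2m$ in \eqref{on-shell-gen-BCJ}; summing over $\beta_j$ identifies this piece with a combination of $\mathcal{A}_{BCJ}$ expressions. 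In case (ii), the full coefficient sum attached to the $\alpha_i$ at the rightmost slot must be rewritten using momentum conservation, $\sum_k s_{\alpha_i\sigma_k}=-s_{\alpha_i,1}$, producing a factor $-p_1\!\cdot\!p_{\alpha_i}$ which is the off-shell defect feeding the $p_1^2$ prefactor on the RHS of \eqref{RBCJ}.

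To complete the equivalence I would exploit the fact that the Berends--Giele current $\mathcal{A}(1,\sigma)$ carries an overall propagator $1/p_1^2$ from the recursion \eqref{B-G}; when combined with the $p_1\!\cdot\!p_{\alpha_i}$ residuals from case (ii), this propagator cancels and the current factorises into products of sub-currents $\mathcal{J}(A_\bullet)\mathcal{J}(B_\bullet)$ whose partition is dictated by where the $\alpha_i$-residual severs the Berends--Giele tree. Matching the resulting combination against the diagrammatic rules of Fig.~\ref{S_{div}} then reproduces the RHS of the revised BCJ relation. Sending $p_1^2\to 0$ discards the case-(ii) contribution, collapsing the identity to the on-shell general BCJ. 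The converse direction follows by reversing the same decomposition inductively on $r+s$: assuming the general BCJ for all smaller configurations, the case-(i) subsums are controlled, and the case-(ii) residuals reconstruct the RHS of \eqref{RBCJ} term by term.

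\textbf{Main obstacle.} The delicate step is verifying that the case-(ii) remainder, after fusing the $p_1\!\cdot\!p_{\alpha_i}$ and $1/p_1^2$ factors through the Berends--Giele recursion, assembles into \emph{exactly} the sub-current product structure with coefficients $S_{div\{\alpha_r,\beta_s\}}$ prescribed by Fig.~\ref{S_{div}}. This combinatorial matching, rather than the splitting of the permutation sum itself, is where the off-shell correction becomes genuinely non-trivial, and it is the main technical burden of the argument; everything else reduces to rearrangement and the on-shell template of \eqref{on-shell-gen-BCJ}.
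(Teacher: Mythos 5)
Your opening move --- splitting the permutation sum according to which leg occupies the rightmost slot --- is indeed the paper's first step, but two things go wrong after that. First, a bookkeeping error: in an ordered permutation $\sigma\in OP(\mathbf{\alpha_r}\bigcup\mathbf{\beta_s})$ the relative orders inside each set are preserved, so the rightmost leg can only be $\alpha_r$ or $\beta_s$, not an arbitrary $\alpha_i$ or $\beta_j$; there is no ``sum over $\beta_j$'' to perform, and (in the backward-reading convention you yourself adopt) the rightmost $\alpha_r$ carries an \emph{empty} coefficient sum rather than an off-shell defect $-p_1\cdot p_{\alpha_i}$. Done correctly, the split is the exact identity $\mathcal{A}_{RBCJ}(r,s)=\mathcal{A}_{BCJ}(r-1,s)+\mathcal{A}_{BCJ}(r,s-1)$, with $\alpha_r$ (resp.\ $\beta_s$) playing the role of the fixed last leg. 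Second, and more seriously, you never actually close the equivalence. The ``sufficient'' direction requires inverting this identity: iterating $\mathcal{A}_{BCJ}(r,s)=\mathcal{A}_{RBCJ}(r,s+1)-\mathcal{A}_{BCJ}(r-1,s+1)$ down to the fundamental case and using $\mathcal{A}_{BCJ}(1,r+s-1)=\mathcal{A}_{RBCJ}(1,r+s)$ (already proven elsewhere) to obtain $\mathcal{A}_{BCJ}(r,s)=\sum_{i=0}^{r-1}(-1)^{i}\mathcal{A}_{RBCJ}(r-i,s+1+i)$. This telescoping inversion, anchored at the fundamental BCJ relation, is the entire content of the proposition, and it is missing from your argument.

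Instead, your second and third paragraphs try to fuse the $1/p_1^2$ propagator of the Berends--Giele current with the putative $p_1\cdot p_{\alpha_i}$ residuals and match the outcome against the sub-current products and coefficients $S_{div\{\mathbf{\alpha_r},\mathbf{\beta_s}\}}$ of Fig.~\ref{S_{div}}. That is not the equivalence statement at all --- it is an attempt to prove the revised BCJ relation \eqref{RBCJ} itself, which is the subject of the separate (and much longer) Section~\ref{PF}, and you concede in your ``main obstacle'' that precisely this matching is left unproven. So the proposal substitutes the hard theorem for the easy proposition and then leaves the hard part open: the equivalence needs no Berends--Giele input, no factorization of currents, and no knowledge of the explicit right-hand side of \eqref{RBCJ}; it is a purely formal relation between the two families of permutation sums, valid once the fundamental BCJ relation is taken as known.
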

\begin{proof}
On the one hand, we can divide the revised BCJ relation into two parts according to the last term in the permutation.
\begin{equation}\label{RBCG-GBCG}
  \begin{split}
  &\Sl_{\sigma\in OP(\mathbf{\alpha_r}\bigcup\mathbf{\beta_s})}\left(\Sl_{i=1}^r\Sl_{\xi_{\sigma_k}<\xi_{\alpha_i}}s_{\alpha_i\sigma_k}\right)\mathcal{A}(1,\{\sigma\})\\
  =&\Sl_{\sigma\in OP(\mathbf{\alpha_{r-1}}\bigcup\mathbf{\beta_s})}\left(\Sl_{l=1}^{r-1}\Sl_{\xi_{\sigma_k}<\xi_{\alpha_i}}s_{\alpha_i\sigma_k}\right)\mathcal{A}(1,\{\sigma\},\alpha_r)\\
  &+\Sl_{\sigma\in OP(\mathbf{\alpha_r}\bigcup\mathbf{\beta_{s-1}})}\left(\Sl_{i=1}^r\Sl_{\xi_{\sigma_k}<\xi_{\alpha_i}}s_{\alpha_i\sigma_k}\right)\mathcal{A}(1,\{\sigma\},\beta_s).
  \end{split}
\end{equation}
or
\begin{equation*}
  \mathcal{A}_{RBCJ}(r,s)=\mathcal{A}_{BCJ}(r-1,s)+\mathcal{A}_{BCJ}(r,s-1)
\end{equation*}
by our notation. Thus if we know the expression of $\mathcal{A}_{BCJ}(r,s)$, we can get the expression of $\mathcal{A}_{RBCJ}(r,s)$ immediately.

On the other hand, we can use (\ref{RBCG-GBCG}) recursively,
\begin{equation*}\label{}
  \begin{split}
  \mathcal{A}_{BCJ}(r,s)=&\mathcal{A}_{RBCJ}(r,s+1)-\mathcal{A}_{BCJ}(r-1,s+1)\\
  \mathcal{A}_{BCJ}(r-1,s+1)=&\mathcal{A}_{RBCJ}(r-1,s+2)-\mathcal{A}_{BCJ}(r-2,s+2)\\
  \mathcal{A}_{BCJ}(r-2,s+2)=&\mathcal{A}_{RBCJ}(r-2,s+3)-\mathcal{A}_{BCJ}(r-3,s+3)\\
  \cdots=&\cdots\\
  \mathcal{A}_{BCJ}(2,s+r-2)=&\mathcal{A}_{RBCJ}(2,s+r-1)-\mathcal{A}_{BCJ}(1,s+r-1)\\
  \end{split}
\end{equation*}
Therefore we finally arrive at
\begin{equation}\label{}
  \mathcal{A}_{BCJ}(r,s)=\Sl_{i=0}^{r-2}(-1)^{i}\mathcal{A}_{RBCJ}(r-i,s+1+i)+(-1)^{r-1}\mathcal{A}_{BCJ}(1,r+s-1).
\end{equation}
Notice that $\mathcal{A}_{BCJ}(1,r+s-1)=\mathcal{A}_{RBCJ}(1,r+s)$, and we can write it uniformly,
\begin{equation}\label{}
  \mathcal{A}_{BCJ}(r,s)=\Sl_{i=0}^{r-1}(-1)^{i}\mathcal{A}_{RBCJ}(r-i,s+1+i).
\end{equation}

Since we have already proven the fundamental BCJ relation in \cite{Chen:2013fya}, i.e. we have already known the expression of $\mathcal{A}_{BCJ}(1,r+s-1)$, we can calculate $\mathcal{A}_{BCJ}(r,s)$ from $\mathcal{A}_{RBCJ}(r,s)$ simultaneously.\footnote{The notation of $\alpha$ and $\beta$ in $\mathcal{A}_{RBCJ}(r,s)$ should be treated very cautiously, and a good way is denoting $\mathbf{\alpha_r}$ by $\{\alpha_1\beta_{s+r}\cdots\beta_{s+2}\beta_{s+1}\}$, which follows the convention in the fundamental BCJ relation.}
\end{proof}
Now we have proven the equivalence, and thus, it is sufficient to prove the revised BCJ relation.

\paragraph{Remark}It is easy to see that this result is consistent with the result of the generalized $U(1)$-decoupling identity \cite{Chen:2014dfa}, by which we mean the generalized $U(1)$-decoupling identity can be easily verified by the revised BCJ relation. For each diagram, we have
\begin{equation}\label{}
    \mathcal{D}_{RBCJ}(r,s)+\mathcal{D}_{RBCJ}(r,\dot{s})=-p_1^2\mathcal{D}_{GU}(r,s),\footnote{The dot in $(r,\dot{s})$ means the BCJ coefficients are caused by $\beta$ rather than $\alpha$.}
\end{equation}
where $\mathcal{D}$ refers to a single diagram. Sum them up and we can get
\begin{equation}\label{BCJ-GU(1)}
    \mathcal{A}_{RBCJ}(r,s)+\mathcal{A}_{RBCJ}(r,\dot{s})=-p_1^2{\mathcal{A}_{GU}}(r,s).
\end{equation}

\section{Revised BCJ Relation for Six-point Currents}\label{EX}
Before proving the general result in Eq. \ref{RBCJ}, we first investigate a six point example. We have two external lines in $\mathbf{\alpha_r}$ and three lines in $\mathbf{\beta_s}$. The amplitude is denoted by $\mathcal{A}(2,3)$. The revised BCJ relation is
\begin{equation*}
\begin{aligned}
  \mathcal{A}_{RBCJ}(2,3)=&-\frac{1}{2F^2}p_1^2s_{\alpha_1, \alpha_2\beta_1\beta_2\beta_3}\mathcal{J}(\alpha_1)\mathcal{J}(\alpha_2)\mathcal{J}(\beta_1\beta_2\beta_3)\\
  &-\left(\frac{1}{2F^2}\right)^2p_1^2(s_{\alpha_1,\alpha_2\beta_2\beta_3}+s_{\alpha_2\beta_3})\mathcal{J}(\alpha_1)\mathcal{J}(\alpha_2)\mathcal{J}(\beta_1)\mathcal{J}(\beta_2)\mathcal{J}(\beta_3),
\end{aligned}\footnote{Generally, $2p_{\alpha_1}(\Sl_{i=m_1}^{m_2}p_{\alpha_i}+\Sl_{j=n_1}^{n_2} p_{\beta_j})$ is denoted by $s_{\alpha_1,\alpha_{m_1} \cdots \alpha_{m_2} \beta_{n_1} \cdots\beta_{n_2}}$.}
\end{equation*}
We will prove it by calculating it directly. In our calculation, let $\mathcal{J}(\alpha_1)\mathcal{J}(\alpha_2)\mathcal{J}(\beta_1)\mathcal{J}(\beta_2)\mathcal{J}(\beta_3)=1$. The diagrams of $\mathcal{A}_{RBCJ}(2,3)$ can be divided into two cases, one vertex and two vertices. And the two vertices case can be divided into three parts, as is presented in Fig.\ref{example} after the appendix.

The first part $A$ contains the sub-current $\mathcal{J}(\beta_1\beta_2\beta_3)$,
\begin{eqnarray*}
  \mathcal{A}_A&=&-p_1^2\left(-\frac{1}{p_1^2}\right)\left(-\frac{1}{2F^2}\right)\left[\right.(s_{\alpha_1,\alpha_2\beta_1\beta_2\beta_3}+s_{\alpha_2,\beta_1\beta_2\beta_3})(p_{\alpha_1}+p_{\beta_1}+p_{\beta_2}+p_{\beta_3})^2\\
  &+&\left.s_{\alpha_1,\alpha_2\beta_1\beta_2\beta_3}(p_{\alpha_1}+p_{\alpha_2})^2
  +s_{\alpha_1\alpha_2}(p_{\alpha_2}+p_{\beta_1}+p_{\beta_2}+p_{\beta_3})^2\right]\mathcal{J}(\alpha_1)\mathcal{J}(\alpha_2)\mathcal{J}(\beta_1\beta_2\beta_3)\\
  &=&-p_1^2\frac{1}{2F^2}s_{\alpha_1,\alpha_2\beta_1\beta_2\beta_3}\mathcal{J}(\alpha_1)\mathcal{J}(\alpha_2)\mathcal{J}(\beta_1\beta_2\beta_3)\\
  &=&-\left(\frac{1}{2F^2}\right)^2(s_{\alpha_1\alpha_2}+s_{\alpha_2,\beta_1\beta_2\beta_3})s_{\beta_1\beta_3}
\end{eqnarray*}

The second part $B$ contains the sub-currents which contain one of the $\mathbf{\alpha_r}$. This part gets contribution from four diagrams (B.1-B.4) in Fig.\ref{example},  $$\mathcal{A}_B=\mathcal{A}_{B_1}+\mathcal{A}_{B_2}+\mathcal{A}_{B_3}+\mathcal{A}_{B_4},$$
where 
\begin{equation*}
\begin{aligned}
  \mathcal{A}_{B_1}=&-p_1^2\left(-\frac{1}{p_1^2}\right)\left(-\frac{1}{2F^2}\right)\left\{(p_{\alpha_1}+p_{\beta_1}+p_{\beta_2}+p_{\beta_3})^2\right.\left[(s_{\alpha_1,\alpha_2\beta_1\beta_2\beta_3}+s_{\alpha_2,\beta_3})\mathcal{J}(\alpha_1\beta_1\beta_2)\right.\\
  &+(s_{\alpha_1,\alpha_2\beta_2\beta_3}+s_{\alpha_2,\beta_3})\mathcal{J}(\beta_1\alpha_1\beta_2)+(s_{\alpha_1,\alpha_2\beta_3}+s_{\alpha_2\beta_3})\mathcal{J}(\beta_1\beta_2\alpha_1)\left.\right]\\
  &+(p_{\alpha_1}+p_{\beta_1}+p_{\beta_2}+p_{\alpha_2})^2\left[\right.s_{\alpha_1,\alpha_2\beta_1\beta_2\beta_3}\mathcal{J}(\alpha_1\beta_1\beta_2)\\
  &+s_{\alpha_1,\alpha_2\beta_2\beta_3}\mathcal{J}(\beta_1\alpha_1\beta_2)+s_{\alpha_1,\alpha_2\beta_3}\mathcal{J}(\beta_1\beta_2\alpha_1)\left.\right]\left.\right\}\\
  =&-\left(\frac{1}{2F^2}\right)^2\left[(p_{\alpha_1}+p_{\beta_1}+p_{\beta_2}+p_{\beta_3})^2(s_{\alpha_1,\alpha_2\beta_2\beta_3}+s_{\alpha_2\beta_3})+(p_{\alpha_1}+p_{\beta_1}+p_{\beta_2}+p_{\alpha_2})^2s_{\alpha_1,\alpha_2\beta_2\beta_3}\right]
  \end{aligned}
\end{equation*}
\begin{equation*}
\begin{aligned}
  \mathcal{A}_{B_2}=&-\left(\frac{1}{2F^2}\right)^2(p_{\alpha_2}+p_{\beta_1})^2s_{\alpha_1,\alpha_2\beta_3}\\
  \mathcal{A}_{B_3}=&-\left(\frac{1}{2F^2}\right)^2\left[\right.(p_{\beta_1}+p_{\alpha_2}+p_{\beta_2}+p_{\beta_3})^2(s_{\alpha_1,\alpha_2\beta_2\beta_3}\\
  &+s_{\alpha_2\beta_3})+(p_{\alpha_1}+p_{\alpha_2}+p_{\beta_2}+p_{\beta_3})^2(s_{\alpha_1,\alpha_2\beta_1\beta_2\beta_3}+s_{\alpha_2\beta_3})\left.\right]\\
  \mathcal{A}_{B_4}=&-\left(\frac{1}{2F^2}\right)^2(p_{\alpha_1}+p_{\beta_3})^2(s_{\alpha_1,\alpha_2\beta_1\beta_2\beta_3}+s_{\alpha_2,\beta_2\beta_3}).
\end{aligned}
\end{equation*}

The third part $C$ contains the sub-currents which contain one of the $\mathbf{\beta_s}$ as shown in diagram (C.1-C.3) in Fig.\ref{example},
$$\mathcal{A}_C=\mathcal{A}_{C_1}+\mathcal{A}_{C_2}+\mathcal{A}_{C_3},$$
where
\begin{equation*}
\begin{aligned}
  \mathcal{A}_{C_1}=&-p_1^2\left(-\frac{1}{p_1^2}\right)\left(-\frac{1}{2F^2}\right)(p_{\alpha_1}+p_{\alpha_2}+p_{\beta_1}+p_{\beta_3})^2\left[\right.(s_{\alpha_1,\alpha_2\beta_1\beta_2\beta_3}+s_{\alpha_2,\beta_1\beta_2\beta_3})\mathcal{J}(\alpha_1\alpha_2\beta_1)\\
  &+(s_{\alpha_1,\alpha_2\beta_1\beta_2\beta_3}+s_{\alpha_2,\beta_2\beta_3})\mathcal{J}(\alpha_1\beta_1\alpha_2)+(s_{\alpha_1,\alpha_2\beta_2\beta_3}+s_{\alpha_2,\beta_2\beta_3})\mathcal{J}(\beta_1\alpha_1\alpha_2)\left.\right]\\
  =&-\left(\frac{1}{2F^2}\right)^2(p_{\alpha_1}+p_{\alpha_2}+p_{\beta_1}+p_{\beta_3})^2(s_{\alpha_1,\alpha_2\beta_1\beta_2\beta_3}+s_{\alpha_2,\beta_2\beta_3})\\
  \mathcal{A}_{C_2}=&-\left(\frac{1}{2F^2}\right)^2(p_{\beta_1}+p_{\beta_3})^2(s_{\alpha_1,\alpha_2\beta_2\beta_3}+s_{\alpha_2\beta_3})\\
  \mathcal{A}_{C_3}=&-\left(\frac{1}{2F^2}\right)^2(p_{\alpha_1}+p_{\alpha_2}+p_{\beta_1}+p_{\beta_3})^2s_{\alpha_1,\alpha_2\beta_3}.
\end{aligned}
\end{equation*}

The one vertex case $D$ is easy to obtain,
\begin{equation*}
\begin{aligned}
  \mathcal{A}_D=&-p_1^2\left(-\frac{1}{p_1^2}\right)\left(-\frac{1}{2F^2}\right)^2\left[\right.(p_{\alpha_1}+p_{\beta_1}+p_{\beta_3})^2(s_{\alpha_1,\alpha_2\beta_1\beta_2\beta_3}+s_{\alpha_2,\beta_1\beta_2\beta_3})\nb\\
  &+(p_{\alpha_1}+p_{\beta_2}+p_{\beta_3})^2(s_{\alpha_1,\alpha_2\beta_1\beta_2\beta_3}+s_{\alpha_2,\beta_3})+(p_{\alpha_1}+p_{\beta_2}+p_{\alpha_2})^2s_{\alpha_1,\alpha_2\beta_1\beta_2\beta_3}\\
  &+(p_{\beta_1}+p_{\alpha_2}+p_{\beta_3})^2(s_{\alpha_1,\alpha_2\beta_2\beta_3}+s_{\alpha_2,\beta_2\beta_3})+(p_{\beta_1}+p_{\beta_2}+p_{\beta_3})^2(s_{\alpha_1,\alpha_2\beta_2\beta_3}+s_{\alpha_2,\beta_3})\\
  &+(p_{\beta_1}+p_{\beta_2}+p_{\alpha_2})^2s_{\alpha_1,\alpha_2\beta_2\beta_3}+(p_{\beta_1}+p_{\alpha_1}+p_{\beta_3})^2(s_{\alpha_1,\alpha_2\beta_3}+s_{\alpha_2\beta_3})\\
  &+(p_{\beta_1}+p_{\alpha_1}+p_{\alpha_2})^2s_{\alpha_1,\alpha_2\beta_3}+(p_{\beta_1}+p_{\beta_3}+p_{\alpha_2})^2s_{\alpha_1\alpha_2}\\
  &+(p_{\alpha_1}+p_{\alpha_2}+p_{\beta_3})^2(s_{\alpha_1,\alpha_2\beta_1\beta_2\beta_3}+s_{\alpha_2,\beta_2\beta_3})\left.\right].
\end{aligned}
\end{equation*}
Considering all four parts, we arrive at the final result,
\begin{equation*}
\begin{aligned}
  \mathcal{A}_A+\mathcal{A}_B+\mathcal{A}_C+\mathcal{A}_D=&-p_1^2\left[\right.\frac{1}{2F^2}s_{\alpha_1,\alpha_2\beta_1\beta_2\beta_3}\mathcal{J}(\alpha_1)\mathcal{J}(\alpha_2)\mathcal{J}(\beta_1\beta_2\beta_3)\\
  &+\left(\frac{1}{2F^2}\right)^2(s_{\alpha_1,\alpha_2\beta_2\beta_3}+s_{\alpha_2\beta_3})\mathcal{J}(\alpha_1)\mathcal{J}(\alpha_2)\mathcal{J}(\beta_1)\mathcal{J}(\beta_2)\mathcal{J}(\beta_3)\left.\right],
\end{aligned}
\end{equation*}
where we have recovered the current $\mathcal{J}(\alpha_1)\mathcal{J}(\alpha_2)\mathcal{J}(\beta_1)\mathcal{J}(\beta_2)\mathcal{J}(\beta_3)$. It is easy to see that this result is consistent with the revised BCJ relation in Eq. \ref{RBCJ}. 

\section{The Proof of Revised Off-shell BCJ Relation}\label{PF}
We will prove (\ref{RBCJ}) recursively, which means when we calculate $\mathcal{A}_{RBCJ}(r,s)$, we will assume that we have already known the form of $\mathcal{A}_{RBCJ}(r',s')$ in the following conditions,
\begin{itemize}
  \item $r'+s'<r+s$,
  \item $r'+s'=r+s$, but $r'<r$.
\end{itemize}

Sometimes, we will use the result in sub-current form, that is, to divide both sides of (\ref{RBCJ}) by $p_1^2$,
\begin{equation}\label{RBCJ-J}
  \begin{split}
  &\Sl_{\sigma\in OP(\mathbf{\alpha_r}\bigcup\mathbf{\beta_s})}\left(\Sl_{i=1}^r\Sl_{\xi_{\sigma_k}>\xi_{\alpha_i}}-s_{\alpha_i\sigma_k}\right)\mathcal{J}(1,\{\sigma\})\\
  =&\Sl_{div\{\mathbf{\alpha_r},\mathbf{\beta_s}\}}   \left(\frac{1}{2F^2}\right)^{\frac{R+S-1}{2}}S_{div\{\mathbf{\alpha_r},\mathbf{\beta_s}\}}\mathcal{J}(A_{1})\cdots \mathcal{J}(A_R)\mathcal{J}(B_{1})\cdots \mathcal{J}(B_S).
  \end{split}
\end{equation}

Before we go ahead to the proof, let's first sketch it as following. 
\begin{enumerate}
\item In subsection \ref{sub_div_bcj}, we divide the BCJ coefficients into two parts, one of which can be obtained immediately from the KK relation \cite{Chen:2014dfa}, so we only need to consider the other part;
\item In subsection \ref{subsec:QP}, we prove that after the revised BCJ permutation summation, only quartic polynomials of external momenta would appear in the coefficients of each division;
\item In subsection \ref{subsec:Complet}, in order to make the argument more rigorous, we associate a totally ordered path to each contribution to a specific division, so that we know that the argument is complete and not repetitive;
\item In subsection \ref{subsec:CurrentRec}, we reduce the proof to the calculation of the coefficients of the divisions without sub-currents, and give the general form of it;
\item In subsection \ref{subsec:COM}, we calculate the coefficient matrices $\mathbf{C}(r,s)$. 
\end{enumerate}
\subsection{Dividing the BCJ coefficients}\label{sub_div_bcj}
Now let us divide $\mathcal{A}_{RBCJ}(r,s)$ into two parts, one contains the terms with coefficients $s_{\alpha_i\alpha_j}$, the other contains the terms with coefficients $s_{\alpha_i\beta_j}$. The first part is denoted by $\mathcal{A}_{RBCJ}^\mathrm{I}(r,s)$, and the second by $\mathcal{A}_{RBCJ}^\mathrm{II}(r,s)$. In the first part, the sum of coefficients $s_{\alpha_i\alpha_j}$ is the same in every term, so we can get the result of the first part by using the generalized $U(1)$-decoupling identity. Then we have,
\begin{equation}\label{}
  \mathcal{A}_{RBCJ}^\mathrm{I}(r,s)=P_{\mathbf{\alpha_r}}^2\Sl_{\sigma\in OP(\mathbf{\alpha_r}\bigcup\mathbf{\beta_s})}\mathcal{A}(1,\{\sigma\})
  =P_{\mathbf{\alpha_r}}^2{\mathcal{A}_{GU}}(r,s)
\end{equation}
Correspondingly, we have a recursion hypothesis in $\mathcal{A}_{RBCJ}^\mathrm{II}(r,s)$ form,
\begin{equation}\label{RBCJ-J}
  \begin{split}
  &\Sl_{\sigma\in OP(\mathbf{\alpha_r}\bigcup\mathbf{\beta_s})}\left(\Sl_{i=1}^r\Sl_{\xi_{\beta_j}>\xi_{\alpha_i}}-s_{\alpha_i\beta_j}\right)\mathcal{J}(1,\{\sigma\})\\
  =&\Sl_{div\{\mathbf{\alpha_r},\mathbf{\beta_s}\}}   \left(\frac{1}{2F^2}\right)^{\frac{R+S-1}{2}}S_{div\{\mathbf{\alpha_r},\mathbf{\beta_s}\}}^\mathrm{II}\mathcal{J}(A_{1})\cdots \mathcal{J}(A_R)\mathcal{J}(B_{1})\cdots \mathcal{J}(B_S).
  \end{split}
\end{equation}
where $S_{div\{\mathbf{\alpha_r},\mathbf{\beta_s}\}}^\mathrm{II}$ contains only $s_{\alpha_i\beta_j}$.

\subsection{Quartic polynomials in the coefficients of each division}\label{subsec:QP}
We can calculate $\mathcal{A}_{RBCJ}^\mathrm{II}(r,s)$ directly to demonstrate the following lemma,
\begin{lemma}\label{polynomial}
After we only permute the $\alpha$ and $\beta$ in same sub-currents attached directly to the vertices containing the off-shell line $1$, the coefficients of each division are all quartic polynomials of external momenta.
\end{lemma}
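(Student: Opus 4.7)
The plan is to examine the Berends-Giele expansion of each $\mathcal{A}(1,\{\sigma\})$ appearing in $\mathcal{A}_{RBCJ}^\mathrm{II}(r,s)$, single out the vertex attached to the off-shell leg $1$, and show that the BCJ coefficient combines with this top vertex factor into a product of two quadratic polynomials in the external momenta.

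First I would apply the Berends-Giele recursion \eqref{B-G} to $\mathcal{A}(1,\{\sigma\}) = p_1^2 \mathcal{J}(1,\{\sigma\})$, in which the overall $p_1^2$ cancels the outermost propagator, so each diagram contributes a top vertex $V_m$ multiplied by a product of sub-currents $\mathcal{J}(X_1),\ldots,\mathcal{J}(X_{m-1})$. By the Feynman rule \eqref{Feyn-rules}, $V_m = (-1/(2F^2))^{(m-2)/2} P^2$, where $P$ is a sum of the total momenta of alternate sub-currents entering the vertex, so $V_m$ is a quadratic polynomial in the external momenta, depending only on which legs populate each $X_k$ and not on the internal ordering inside $X_k$.

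Next, for a fixed division I would restrict the sum to those permutations $\sigma$ that preserve the assignment of external legs to the top-level sub-currents but may rearrange the relative order of $\alpha$'s and $\beta$'s inside a single $X_k$. For every such $\sigma$ the BCJ coefficient $\sum_{i,j:\xi_{\beta_j}>\xi_{\alpha_i}}(-s_{\alpha_i\beta_j})$ is a linear combination of Mandelstam invariants $s_{\alpha_i\beta_j} = 2 p_{\alpha_i}\cdot p_{\beta_j}$, each itself quadratic in the momenta. Consequently the coefficient $(\text{BCJ coefficient})\cdot V_m$ multiplying the sub-current product is a product of two quadratic polynomials, i.e.\ a quartic polynomial in the external momenta, matching the six-point pattern exhibited explicitly in Section~\ref{EX}.

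The main obstacle is bookkeeping: the sub-currents $\mathcal{J}(X_k)$ themselves depend on the internal ordering of $X_k$, so the restricted permutation sum does not immediately factorize as a single quartic coefficient times a fixed product of sub-currents. To organize the sum into a clean division-indexed form I plan to invoke the KK and generalized $U(1)$-decoupling identities of \cite{Chen:2014dfa} to collect different internal orderings of each sub-current into canonical representatives, at the cost of relabeling momenta. Momentum conservation $p_1 = -\sum_i p_{\alpha_i} - \sum_j p_{\beta_j}$ will then be needed to simplify the resulting quartic coefficients and to verify that no higher-degree contributions survive, thereby confirming Lemma~\ref{polynomial}.
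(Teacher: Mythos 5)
Your overall skeleton (isolate the top vertex attached to leg $1$, note it is quadratic in external momenta, pair it with a quadratic BCJ weight to get a quartic coefficient) matches the paper's, but the step you dismiss as ``bookkeeping'' is in fact the substantive content of the lemma, and your plan for it would fail. The BCJ weight $\sum_{i}\sum_{\xi_{\beta_j}>\xi_{\alpha_i}}(-s_{\alpha_i\beta_j})$ is \emph{not} constant over the restricted permutation sum: the invariants $s_{\alpha_i\beta_j}$ with both legs inside the same top-level block $\mathcal{J}_k$ change as the internal ordering $\sigma_k$ changes (whether $\beta_j$ sits after $\alpha_i$ depends on $\sigma_k$). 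Hence the intra-block sums are \emph{weighted} sums $\sum_{\sigma_k}S_k\mathcal{J}_k$, and the KK/generalized $U(1)$-decoupling identities you propose to invoke only resum \emph{unweighted} orderings; they cannot collect these weighted orderings into ``canonical representatives,'' and neither momentum conservation nor relabeling removes an ordering-dependent weight. As written, your argument only shows that each individual ordering $\sigma$ carries a quartic prefactor multiplying sub-currents of mixed $\alpha,\beta$ content; it does not show that, after the internal sums, only pure-$\alpha$/pure-$\beta$ divisions survive with quartic coefficients, which is what the lemma asserts.

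The missing idea, which is the paper's key move, is to split the weight as $\sum_k S_k+\bar S$ (Eq.~(\ref{div_S})): the inter-block part $\bar S$ depends only on the block ordering $\bar\sigma$, so it factors out and the unweighted internal sums $\sum_{\sigma_k}\mathcal{J}_k$ are handled by the generalized $U(1)$-decoupling identity (Eq.~(\ref{J_k-GU(1)})); the intra-block weighted sums $\sum_{\sigma_k}S_k\mathcal{J}_k$ are evaluated by the \emph{recursion hypothesis}, i.e.\ the revised BCJ relation itself in sub-current form at lower multiplicity (Eq.~(\ref{J_k-BCJ})), which returns pure divisions with coefficients $S^{\mathrm{II}}$ that are again quadratic. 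Only then does the coefficient of a given division take the form $V(m,div,\bar\sigma)\left(\sum_k S^{\mathrm{II}}_{div}+\bar S\right)$, quadratic times quadratic, hence quartic. If you add this decomposition and the inductive use of the revised BCJ relation for the $S_k$ pieces, your outline becomes essentially the paper's proof; without it there is a genuine gap.
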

\begin{proof}
First, we can use Berends-Giele recursion relation to make every term to be summed explicit,
\begin{equation}\label{}
  \begin{split}
  \mathcal{A}_{RBCJ}^\mathrm{II}(r,s)=&\Sl_{\sigma\in OP(\mathbf{\alpha_r}\bigcup\mathbf{\beta_s})}\left(\Sl_{i=1}^r\Sl_{\xi_{\beta_j}>\xi_{\alpha_i}}-s_{\alpha_i\beta_j}\right)\mathcal{A}(1,\{\sigma\})\\
  =&\Sl_{\sigma\in OP(\mathbf{\alpha_r}\bigcup\mathbf{\beta_s})} \left(\Sl_{i=1}^r\Sl_{\xi_{\beta_j}>\xi_{\alpha_i}}s_{\alpha_i\beta_j}\right)\left(\Sl_{m=4}^n\Sl_{div}V(m,div,\sigma)\mathcal{J}_1\mathcal{J}_2\cdots \mathcal{J}_{m-1}\right),\\
  \end{split}
\end{equation}
where $\mathcal{J}_i$ means some sub-currents which are attached to the vertices containing the off-shell line $1$. Changing the order of the summations $\Sl_{\sigma\in OP(\mathbf{\alpha_r}\bigcup\mathbf{\beta_s})}$ and $\Sl_{m=4}^n\Sl_{div}$, we have
\begin{equation}\label{}
  \mathcal{A}_{RBCJ}^\mathrm{II}(r,s)=\Sl_{m=4}^n\Sl_{div}\Sl_{\sigma\in OP(\mathbf{\alpha_r}\bigcup\mathbf{\beta_s})} \left(\Sl_{i=1}^r\Sl_{\xi_{\beta_j}>\xi_{\alpha_i}}s_{\alpha_i\beta_j}\right)V(m,div,\sigma)\mathcal{J}_1\mathcal{J}_2\cdots \mathcal{J}_{m-1}
\end{equation}
Second, the permutation $\sigma$ can be divided into permutation $\sigma_k$ within the sub-current $\mathcal{J}_k$ and $\bar{\sigma}$ for the permutation involving exchange among different sub-currents, i.e. $\sigma=\bar{\sigma}\prod\limits_{k=1}^{m-1}\sigma_k$. Then we can split $\Sl_{\sigma\in OP(\mathbf{\alpha_r}\bigcup\mathbf{\beta_s})}$ into $\Sl_{\bar{\sigma}}\Sl_{\sigma_{m-1}}\Sl_{\sigma_{m-2}}\cdots\Sl_{\sigma_1}$,
\begin{equation}\label{}
  \mathcal{A}_{RBCJ}^\mathrm{II}(r,s)=\Sl_{m=4}^n\Sl_{div}\Sl_{\bar{\sigma}}\Sl_{\sigma_{m-1}}\Sl_{\sigma_{m-2}}\cdots\Sl_{\sigma_1}V(m,div,\sigma) \left(\Sl_{i=1}^r\Sl_{\xi_{\beta_j}>\xi_{\alpha_i}}s_{\alpha_i\beta_j}\right)\mathcal{J}_1\mathcal{J}_2\cdots \mathcal{J}_{m-1}.
\end{equation}
Correspondingly, we can divide the BCJ coefficients $\Sl_{i=1}^r\Sl_{\xi_{\beta_j}>\xi_{\alpha_i}}s_{\alpha_i\beta_j}$ into two parts. In the first part $S_k$, both indices in each $s_{\alpha_i\beta_j}$ belong to $\mathcal{J}_k$, while in the second part $\bar{S}$, two indices belong to different sub-currents. Therefore we have $\Sl_{i=1}^r\Sl_{\xi_{\beta_j}>\xi_{\alpha_i}}s_{\alpha_i\beta_j}=\Sl_{k=1}^{m-1}S_k+\bar{S}$, where
\begin{equation}\label{div_S}
  S_k=\Sl_{\alpha_i\in \mathcal{J}_k}\Sl_{\substack{\beta_j\in \mathcal{J}_k\\\xi_{\beta_j}>\xi_{\alpha_i}}}s_{\alpha_i\beta_j},\ \bar{S}=\Sl_{k=1}^{m-1}\Sl_{\alpha_i\in \mathcal{J}_k}\Sl_{\substack{\beta_j\notin \mathcal{J}_k\\\xi_{\beta_j}>\xi_{\alpha_i}}}s_{\alpha_i\beta_j}
\end{equation}
And since the expression of $V(m,div,\bar{\sigma}\prod\limits_{k=1}^{m-1}\sigma_k)$ is not changed under $\sigma_k$, it can be simply written as $V(m,div,\bar{\sigma})$. Having noticed that $\sigma_k$ only acts on $\mathcal{J}_k$, we have
\begin{equation}\label{$A_{RBCJ2}$-splitted}
  \begin{split}
  \mathcal{A}_{RBCJ}^\mathrm{II}(r,s)=&~~~\Sl_{m=4}^n\Sl_{div}\Sl_{\bar{\sigma}}V(m,div,\bar{\sigma})\left(\Sl_{\sigma_1} S_1\mathcal{J}_1\right) \left(\Sl_{\sigma_{2}}\mathcal{J}_{2}\right)\cdots\left(\Sl_{\sigma_{m-1}}\mathcal{J}_{m-1}\right)\\
  &+\Sl_{m=4}^n\Sl_{div}\Sl_{\bar{\sigma}}V(m,div,\bar{\sigma}) \left(\Sl_{\sigma_{1}}\mathcal{J}_{1}\right)\left(\Sl_{\sigma_{2}}S_2\mathcal{J}_2\right)\cdots\left(\Sl_{\sigma_{m-1}}\mathcal{J}_{m-1}\right)\\
  &~~~~~~~~~~~~~~~~~~~~~~~~~~~~~~~~~~~~~~\cdots\\
  &+\Sl_{m=4}^n\Sl_{div}\Sl_{\bar{\sigma}}V(m,div,\bar{\sigma})\left(\Sl_{\sigma_{1}}\mathcal{J}_{1}\right)\left(\Sl_{\sigma_{2}}\mathcal{J}_{2}\right)\cdots \left(\Sl_{\sigma_{m-1}}S_{m-1}\mathcal{J}_{m-1}\right)\\
  &+\Sl_{m=4}^n\Sl_{div}\Sl_{\bar{\sigma}}V(m,div,\bar{\sigma})\bar{S}\left(\Sl_{\sigma_{1}}\mathcal{J}_{1}\right)\left(\Sl_{\sigma_{2}}\mathcal{J}_{2}\right) \cdots\left(\Sl_{\sigma_{m-1}}\mathcal{J}_{m-1}\right)\\
  \end{split}
\end{equation}
Then, using the result from the generalized $U(1)$-decoupling identity and recursion hypothesis of BCJ relation with less external lines, we have
\begin{equation}\label{J_k-GU(1)}
  \Sl_{\sigma_{k}}\mathcal{J}_{k}=\Sl_{div\{\alpha,\beta\in \mathcal{J}_k\}}\left(\frac{1}{2F^2}\right)^{\frac{R_k+S_k-1}{2}}\mathcal{J}(A_1^k)\cdots \mathcal{J}(A_{R_k}^k)\mathcal{J}(B_1^k)\cdots \mathcal{J}(B_{S_k}^k),
\end{equation}
\begin{equation}\label{J_k-BCJ}
  \Sl_{\sigma_{k}}S_k\mathcal{J}_k=\Sl_{div\{\alpha,\beta\in \mathcal{J}_k\}}\left(\frac{1}{2F^2}\right)^{\frac{R_k+S_k-1}{2}}S_{div\{\alpha,\beta\in \mathcal{J}_k\}}^\mathrm{II}\mathcal{J}(A_1^k)\cdots \mathcal{J}(A_{R_k}^k)\mathcal{J}(B_1^k)\cdots \mathcal{J}(B_{S_k}^k),
\end{equation}
where $\{A_1^k,\cdots,A_{R_k}^k,B_1^k,\cdots,B_{S_k}^k\}$ is a division of the external lines in $\mathcal{J}_{k}$.
Substitute (\ref{J_k-GU(1)}) and (\ref{J_k-BCJ}) into (\ref{$A_{RBCJ2}$-splitted}), and we arrive at
\begin{eqnarray}\label{arrive}
 &~& \mathcal{A}_{RBCJ}^\mathrm{II}(r,s)=
  \Sl_{m=4}^n\Sl_{div}\Sl_{\bar{\sigma}}V(m,div,\bar{\sigma})\left(\Sl_{div\{\alpha,\beta\in \mathcal{J}_{m-1}\}}\cdots\Sl_{div\{\alpha,\beta\in \mathcal{J}_{1}\}}\right.\nb\\
 &~& \left.(\Sl_{k=1}^{m-1}S_{div\{\alpha,\beta\in \mathcal{J}_k\}}^\mathrm{II}+\bar{S})\prod_{l=1}^{m-1}\left(\frac{1}{2F^2}\right)^{\frac{R_l+S_l-1}{2}}\mathcal{J}(A_1^l)\cdots \mathcal{J}(A_{R_l}^l)\mathcal{J}(B_1^l)\cdots \mathcal{J}(B_{S_l}^l)\right)\nb\\
\end{eqnarray}
The formula above contains all the possible division. If we concentrate on a specific division of sub-currents $\prod\limits_{l=1}^{m-1}\left(\frac{1}{2F^2}\right)^{\frac{R_l+S_l-1}{2}}\mathcal{J}(A_1^l)\cdots \mathcal{J}(A_{R_l}^l)\mathcal{J}(B_1^l)\cdots \mathcal{J}(B_{S_l}^l)$, the corresponding coefficient
\begin{equation}\label{sum_coef}
  \mathcal{VS}_{div}(r,s)\equiv\Sl_{m=4}^n\Sl_{div}\Sl_{\bar{\sigma}}V(m,div,\bar{\sigma})\left(\Sl_{k=1}^{m-1}S_{div\{\alpha,\beta\in \mathcal{J}_k\}}^\mathrm{II}+\bar{S}\right)
\end{equation}
is quartic polynomial of external momenta.

In a nutshell, according to Berends-Giele recursion relation, generalized $U(1)$-decoupling identity and recursion hypothesis of the revised BCJ relation, there are no sub-currents with both $\alpha$ and $\beta$ in the summation, and the coefficients before we sum up $\Sl_{m=4}^n\Sl_{div}\Sl_{\bar{\sigma}}$ are all quartic polynomials of external momenta.
\end{proof}

Now let us concentrate on the summation $\Sl_{m=4}^n\Sl_{div}\Sl_{\bar{\sigma}}$. For convenience, we can denote the specific division $div\left\{\bigcup\limits_{l=1}^{m-1}A_1^l\cdots A_{R_l}^lB_1^l\cdots B_{S_l}^l\right\}$
by $div\{A_{1}\cdots A_RB_{1}\cdots B_S\}$, where $\Sl_{l=1}^{m-1}R_l=R$, $\Sl_{l=1}^{m-1}S_l=S$. The summation may bring about $P_{A_I}^2(I=1,\cdots,R)$ or $P_{B_J}^2(J=1,\cdots,S)$, then kill the propagator, for example $\frac{1}{P_{A_1}^2}$ in $\mathcal{J}(A_1)$, and make it finer sub-currents $\mathcal{J}(A_{1_1})\cdots \mathcal{J}(A_{1_{R_1}})$ which belong to another division. However, the coefficients in the terms which are belong to another division is still quartic polynomial of external momenta.

\subsection{Completeness and non-repetitiveness of the argument}\label{subsec:Complet}
That the coefficients in each division are quartic polynomials is out of question now. However, one may suspect that the completeness and non-repetitiveness of each division after the summation cannot be guaranteed automatically. In order to make it more manifest, we can equip all the divisions with a partial order structure defined as following,
\begin{definition}\label{partial order}~

\begin{enumerate}
  \item $A_{1}\cdots A_R$ and $A'_{1}\cdots A'_{R'}$ are two divisions of $\alpha_1\cdots\alpha_r$. We say $A'_{1}\cdots A'_{R'}<A_{1}\cdots A_R$, if $\forall~A'_{I'}\in\{A'_1,\cdots,A'_{R'}\}$, $\exists~A_I\in\{A_1,\cdots,A_{R}\}$, s.t. $A'_{I'}\subset A_I$.
  \item $B_{1}\cdots B_S$ and $B'_{1}\cdots B'_{S'}$ are two divisions of $\beta_1\cdots\beta_s$. We say $B'_{1}\cdots B'_{S'}<B_{1}\cdots B_S$, if $\forall~B'_{J'}\in\{B'_1,\cdots,B'_{S'}\}$, $\exists~B_J\in\{B_1,\cdots,B_{S}\}$, s.t. $B'_{J'}\subset B_J$.
  \item $A_{1}\cdots A_RB_{1}\cdots B_S$ and $A'_{1}\cdots A'_{R'}B'_{1}\cdots B'_{S'}$ are two divisions of $\alpha_1\cdots\alpha_r\beta_1\cdots\beta_s$. We say $A'_{1}\cdots A'_{R'}B'_{1}\cdots B'_{S'}<A_{1}\cdots A_RB_{1}\cdots B_S$, if $A'_{1}\cdots A'_{R'}<A_{1}\cdots A_R$ and $B'_{1}\cdots B'_{S'}<B_{1}\cdots B_S$.
\end{enumerate}
\end{definition}
By this definition, only the divisions which are bigger than a certain $div$ have the possibility to kill some propagators and contribute to it. Take $\mathcal{J}(\alpha_1\cdots\alpha_7\beta_1\beta_2)$ for example. The order of $div\{\beta_1\beta_2\}$ is trivial, so we only need to pay attention to the order of $div\{\alpha_1\cdots\alpha_7\}$. Let's introduce a diagram representation of a certain division for convenience. For example, we use Fig.\ref{rep_div} to represent the division $\mathcal{J}(\alpha_1)\mathcal{J}(\alpha_2)\mathcal{J}(\alpha_3)\mathcal{J}(\alpha_4)\mathcal{J}(\alpha_5\alpha_6\alpha_7)$ in our representation. We can draw a Hasse diagram in Fig.\ref{Hasse_7} to express its partial order structure.
\begin{figure}[H]
  \centering
  \includegraphics[width=0.1\textwidth]{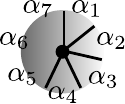}\\
  \caption{Representation of division $\mathcal{J}(\alpha_1)\mathcal{J}(\alpha_2)\mathcal{J}(\alpha_3)\mathcal{J}(\alpha_4)\mathcal{J}(\alpha_5\alpha_6\alpha_7)$ as an example.}\label{rep_div}
\end{figure}

\begin{figure}[htbp]
  \centering
  \includegraphics[width=0.5\textwidth]{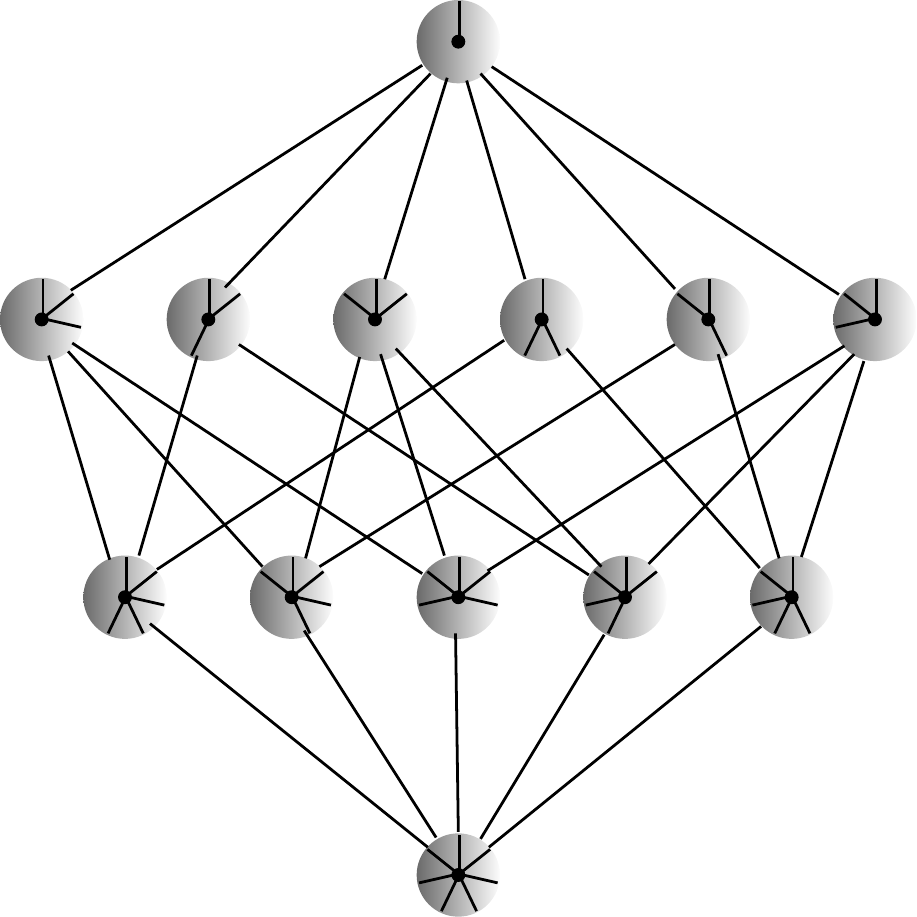}\\
  \caption{The Hasse diagram for the different divisions of $\mathcal{J}(\alpha_1\cdots\alpha_7\beta_1\beta_2)$. The divisions in the same row are of same number of parts. We shall use $\mathbf{i}_j$ to denote the $j$-th division in the $i$-th row for simplicity.}\label{Hasse_7}
\end{figure}

Now let us consider the 5-part division $\mathbf{3}_1$ as an example. The divisions which are bigger than it are 3-part divisions $\mathbf{2}_1$, $\mathbf{2}_2$, $\mathbf{2}_4$ and 1-part division $\mathbf{1}_1$.

First, let us consider the contribution from $\mathbf{2}_1$ to $\mathbf{3}_1$. The diagrams in $\mathbf{2}_1$ which can contribute to $\mathbf{3}_1$ are of the form in the left diagram in Fig.\ref{F2-3}, where $\alpha_3,\cdots,\alpha_7$ are always in one sub-current which is attached to the vertices containing the off-shell line 1. Since we can calculate the term without sub-currents in $\mathcal{VS}_{0}(3,2)$\footnote{Here 0 is used to denote the division without sub-currents, and we will keep this convention in the rest of this paper.} with $\alpha_3$ off-shell,
\begin{figure}[h!]
  \centering
  \includegraphics[width=0.7\textwidth]{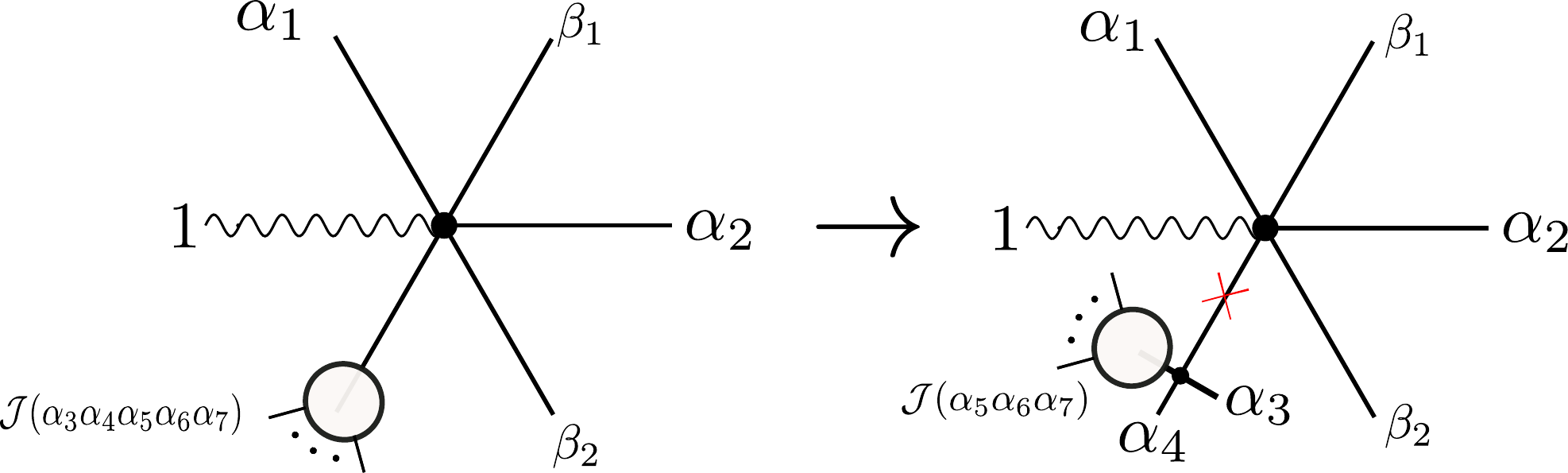}\\
  \caption{Direct contribution to  $\mathbf{3}_1$ from $\mathbf{2}_1$, excluding the contribution from $\mathbf{1}_1$, denoted by $\mathbf{2}_1\rightarrow\mathbf{3}_1$}\label{F2-3}
\end{figure}
\begin{equation}
  \begin{split}
  \mathcal{VS}_{0}(3,2)=-(s_{\alpha_1,\beta_1\beta_2}+s_{\alpha_2,\beta_2})p_1^2+s_{\alpha_1\alpha_2\alpha_3,\beta_1}s_{\alpha_1\alpha_3}+(s_{\alpha_1\alpha_2\alpha_3,\beta_1}+s_{\alpha_1\beta_2})p_{\alpha_3}^2,
  \end{split}
\end{equation}
substitute the off-shell $p_{\alpha_3}$ with $(p_{\alpha_3}+p_{\alpha_4}+p_{\alpha_5}+p_{\alpha_6}+p_{\alpha_7})$, and we can get
\begin{equation}\label{2_1}
  \begin{split}
  &\mathcal{VS}_{\mathbf{2}_1}(7,2)\mathcal{J}(\alpha_1)\mathcal{J}(\alpha_2)\mathcal{J}(\alpha_3\cdots\alpha_7)\mathcal{J}(\beta_1)\mathcal{J}(\beta_2)\\ =&-\left((s_{\alpha_1,\beta_1\beta_2}+s_{\alpha_2,\beta_2})p_1^2+s_{\alpha_1\cdots\alpha_7,\beta_1}s_{\alpha_1,\alpha_3\cdots\alpha_7}+(s_{\alpha_1\cdots\alpha_7,\beta_1}+s_{\alpha_1\beta_2}) \left(p_{\alpha_3}+p_{\alpha_4}+p_{\alpha_5}+p_{\alpha_6}+p_{\alpha_7}\right)^2\right)\\
  &\times\mathcal{J}(\alpha_1)\mathcal{J}(\alpha_2)\mathcal{J}(\alpha_3\cdots\alpha_7)\mathcal{J}(\beta_1)\mathcal{J}(\beta_2)
  \end{split}
\end{equation}
The first coefficient in (\ref{2_1}) remains in the division $\mathbf{2}_1$, the second one is killed by the contribution from $\mathbf{1}_1$ to $\mathbf{2}_1$ as is calculated in (\ref{1-2}), while the third one contributes from $\mathbf{2}_1$ to $\mathbf{3}_1$, $\mathbf{3}_2$ and $\mathbf{3}_3$. Therefore we have,
\begin{equation}\label{2_1'}
  \begin{split}
  \mathcal{A}_{RBCJ\mathbf{2}_1}^\mathrm{II}(7,2)= -(s_{\alpha_1,\beta_1\beta_2}+s_{\alpha_2,\beta_2})p_1^2\mathcal{J}(\alpha_1)\mathcal{J}(\alpha_2)\mathcal{J}(\alpha_3\cdots\alpha_7)\mathcal{J}(\beta_1)\mathcal{J}(\beta_2)
  \end{split}
\end{equation}
The contribution from $\mathbf{2}_1$ to $\mathbf{3}_1$ (Fig.\ref{F2-3}) is
\begin{equation}\label{2-3}
  \mathbf{2}_1\rightarrow\mathbf{3}_1: (s_{\alpha_1\cdots\alpha_7,\beta_1}+s_{\alpha_1\beta_2})\left(p_{\alpha_3}+p_{\alpha_5}+p_{\alpha_6}+p_{\alpha_7}\right)^2.
\end{equation}

Similarly, we can get the contribution from $\mathbf{2}_2$ (Fig.\ref{F2'-3}) and $\mathbf{2}_4$ (Fig.\ref{F2''-3}) to $\mathbf{3}_1$,
\begin{figure}[h!]
  \centering
  \includegraphics[width=0.7\textwidth]{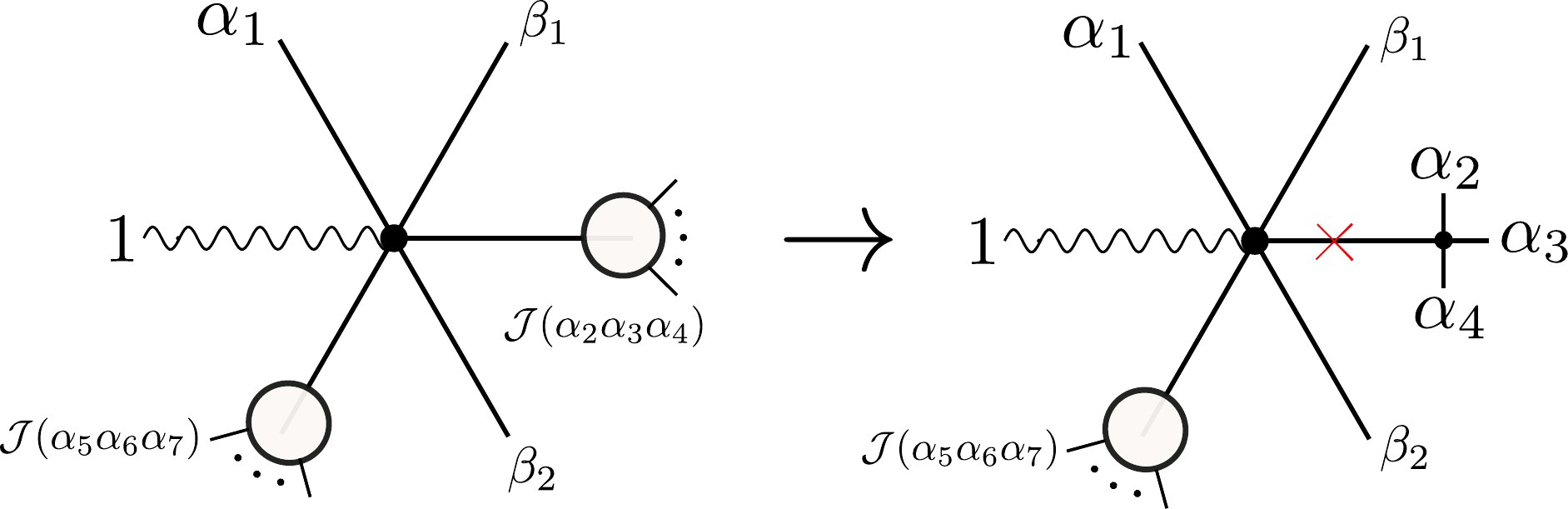}\\
  \caption{$\mathbf{2}_2\rightarrow\mathbf{3}_1$}\label{F2'-3}
\end{figure}
\begin{figure}[h!]
  \centering
  \includegraphics[width=0.7\textwidth]{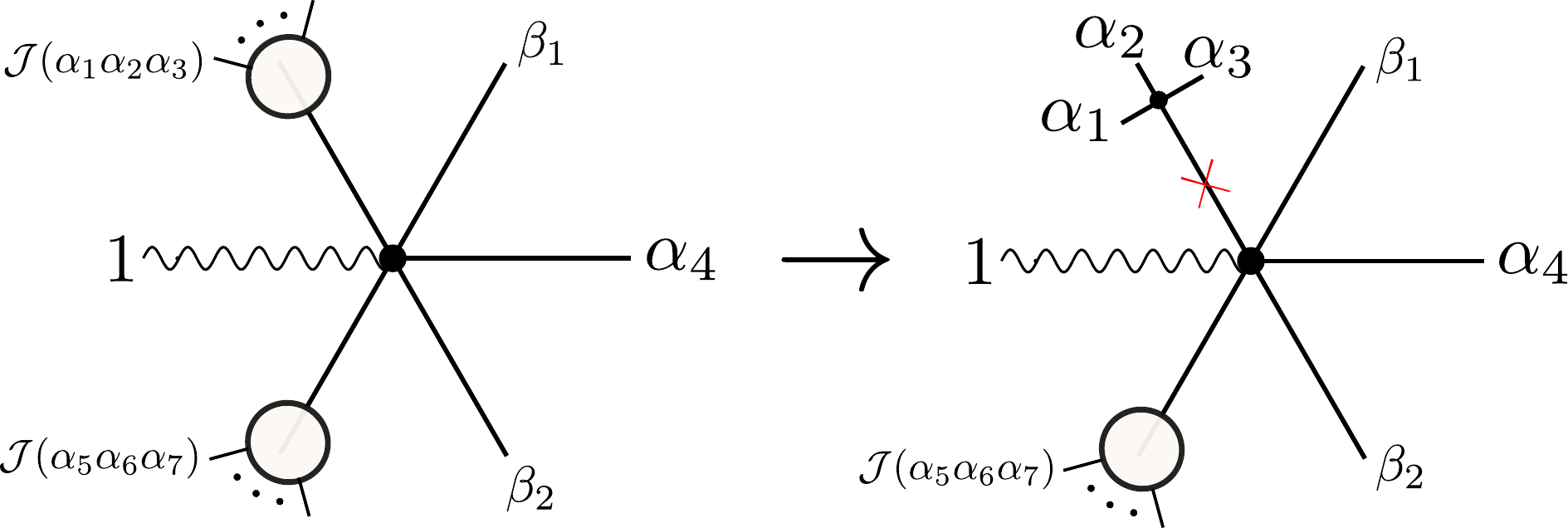}\\
  \caption{$\mathbf{2}_4\rightarrow\mathbf{3}_1$}\label{F2''-3}
\end{figure}
\begin{equation}\label{2-3'}
  \begin{split}
  &\mathbf{2}_2\rightarrow\mathbf{3}_1: (s_{\alpha_1,\beta_2}+2s_{\alpha_2\alpha_3\alpha_4,\beta_2})\left(p_{\alpha_2}+p_{\alpha_4}\right)^2\\
  &\mathbf{2}_4\rightarrow\mathbf{3}_1: (s_{\alpha_1\cdots\alpha_7,\beta_1}+s_{\alpha_1\alpha_2\alpha_3\alpha_4,\beta_1}+s_{\alpha_1\cdots\alpha_7,\beta_2})\left(p_{\alpha_1}+p_{\alpha_3}\right)^2,\\
  \end{split}
\end{equation}

Second, let us consider the contribution from $\mathbf{1}_1$ to $\mathbf{3}_1$. They are composed of four parts, contribution from $\mathbf{1}_1$ to $\mathbf{3}_1$ via $\mathbf{2}_1$ (Fig.\ref{F1-2-3}), contribution from $\mathbf{1}_1$ to $\mathbf{3}_1$ via $\mathbf{2}_2$ (Fig.\ref{F1-2'-3}), contribution from $\mathbf{1}_1$ to $\mathbf{3}_1$ via $\mathbf{2}_4$ (Fig.\ref{F1-2''-3}) and contribution from $\mathbf{1}_1$ to $\mathbf{3}_1$ directly (Fig.\ref{F1-3}).Again, $\mathcal{VS}_{0}(1,2)$ with $\alpha_1$ off-shell can be calculated directly,
\begin{equation}
  \mathcal{VS}_{0}(1,2)\mathcal{J}(\alpha_1)\mathcal{J}(\beta_1)\mathcal{J}(\beta_2) =-(s_{\alpha_1\beta_2}p_1^2+s_{\alpha_1\beta_1}p_{\alpha_1}^2)\mathcal{J}(\alpha_1)\mathcal{J}(\beta_1)\mathcal{J}(\beta_2).
\end{equation}
Substitute the off-shell $p_{\alpha_1}$ with $(p_{\alpha_1}+p_{\alpha_2}+p_{\alpha_3}+p_{\alpha_4}+p_{\alpha_5}+p_{\alpha_6}+p_{\alpha_7})$, and we can get the contribution from $\mathbf{1}_1$ to $\mathbf{2}_1$, $\mathbf{2}_2$ and $\mathbf{2}_4$,
\begin{equation}\label{}
  \begin{split}
  &\mathbf{1}_1\rightarrow\mathbf{2}_1: -s_{\alpha_1\cdots\alpha_7,\beta_1}\left(p_{\alpha_1}+p_{\alpha_3}+p_{\alpha_4}+p_{\alpha_5}+p_{\alpha_6}+p_{\alpha_7}\right)^2\\
  &\mathbf{1}_1\rightarrow\mathbf{2}_2: -s_{\alpha_1\cdots\alpha_7,\beta_1}\left(p_{\alpha_1}+p_{\alpha_5}+p_{\alpha_6}+p_{\alpha_7}\right)^2\\
  &\mathbf{1}_1\rightarrow\mathbf{2}_4: -s_{\alpha_1\cdots\alpha_7,\beta_1}\left(p_{\alpha_1}+p_{\alpha_2}+p_{\alpha_3}+p_{\alpha_5}+p_{\alpha_6}+p_{\alpha_7}\right)^2\\
  \end{split}
\end{equation}
Take $\mathbf{1}_1\rightarrow\mathbf{2}_1$ as an example,
\begin{equation}\label{1-2}
  -s_{\alpha_1\cdots\alpha_7,\beta_1}s_{\alpha_1,\alpha_3\cdots\alpha_7}
\end{equation}
will remain in $\mathbf{2}_1$, and since it does not contain cross terms of $\alpha_3,\cdots,\alpha_7$, it will never contribute to the terms which can offset the propagator in $\mathcal{J}(\alpha_3\cdots\alpha_7)$.
\begin{equation}\label{}
  -s_{\alpha_1\cdots\alpha_7,\beta_1}\left(p_{\alpha_3}+p_{\alpha_4}+p_{\alpha_5}+p_{\alpha_6}+p_{\alpha_7}\right)^2
\end{equation}
will offset the propagator in $\mathcal{J}(\alpha_3\cdots\alpha_7)$, and part of it will contribute to $\mathbf{3}_1$ (Fig.\ref{F1-2-3}),
\begin{equation}\label{1-2-3}
  \mathbf{1}_1\xrightarrow{\mathbf{2}_1}\mathbf{3}_1: -s_{\alpha_1\cdots\alpha_7,\beta_1}\left(p_{\alpha_3}+p_{\alpha_5}+p_{\alpha_6}+p_{\alpha_7}\right)^2
\end{equation}
\begin{figure}[h!]
  \centering
  \includegraphics[width=0.8\textwidth]{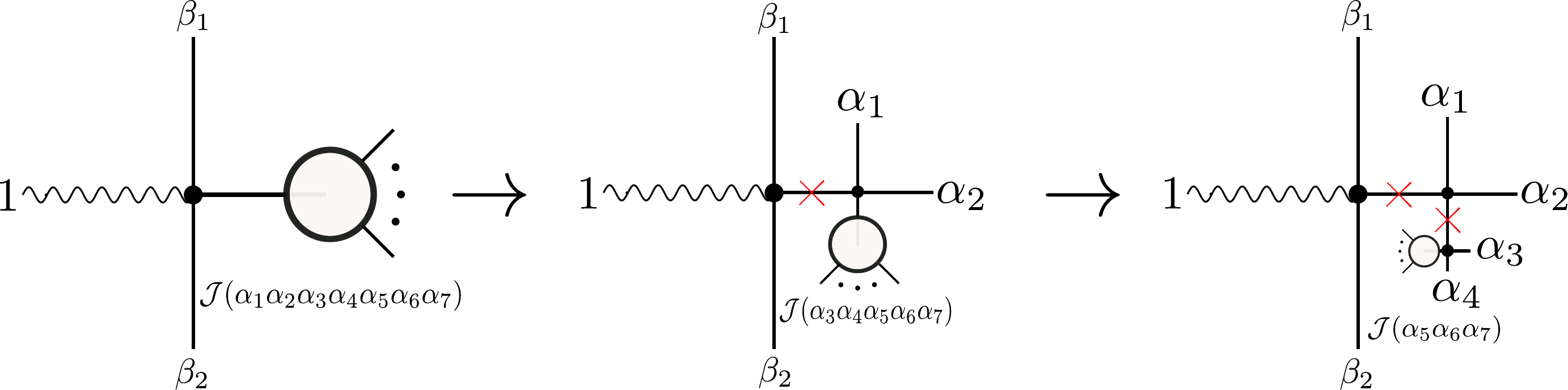}\\
  \caption{The contribution to $\mathbf{3}_1$ from $\mathbf{1}_1$ via $\mathbf{2}_1$, denoted by $\mathbf{1}_1\xrightarrow{\mathbf{2}_1}\mathbf{3}_1$.}\label{F1-2-3}
\end{figure}
Similarly, the contribution from $\mathbf{1}_1$ to $\mathbf{3}_1$ via $\mathbf{2}_2$ (Fig.\ref{F1-2'-3}) and $\mathbf{2}_4$ (Fig.\ref{F1-2''-3}) is
\begin{equation}\label{1-2'-3}
  \begin{split}
  &\mathbf{1}_1\xrightarrow{\mathbf{2}_2}\mathbf{3}_1: 0\\
  &\mathbf{1}_1\xrightarrow{\mathbf{2}_4}\mathbf{3}_1: -s_{\alpha_1\cdots\alpha_7,\beta_1}\left(p_{\alpha_1}+p_{\alpha_3}\right)^2
  \end{split}
\end{equation}
\begin{figure}[h!]
  \centering
  \includegraphics[width=0.6\textwidth]{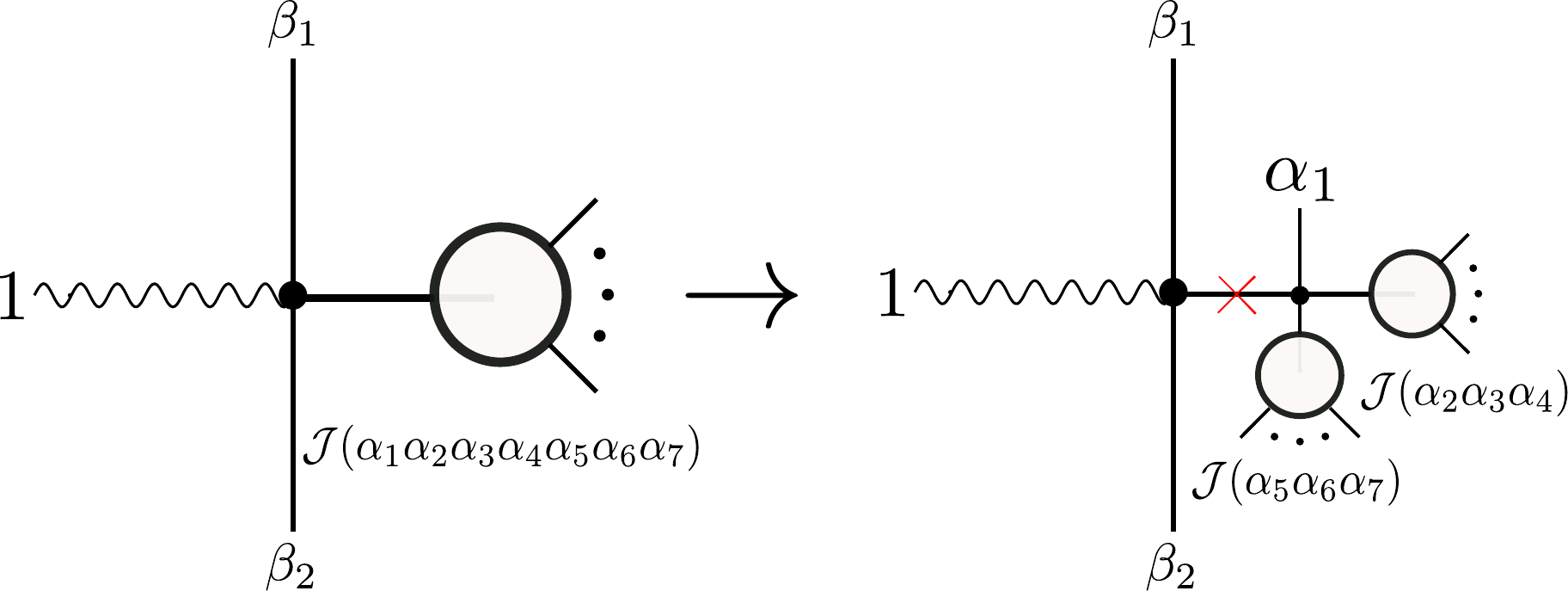}\\
  \caption{$\mathbf{1}_1\xrightarrow{\mathbf{2}_2}\mathbf{3}_1$}\label{F1-2'-3}
\end{figure}
\begin{figure}[h!]
  \centering
  \includegraphics[width=0.8\textwidth]{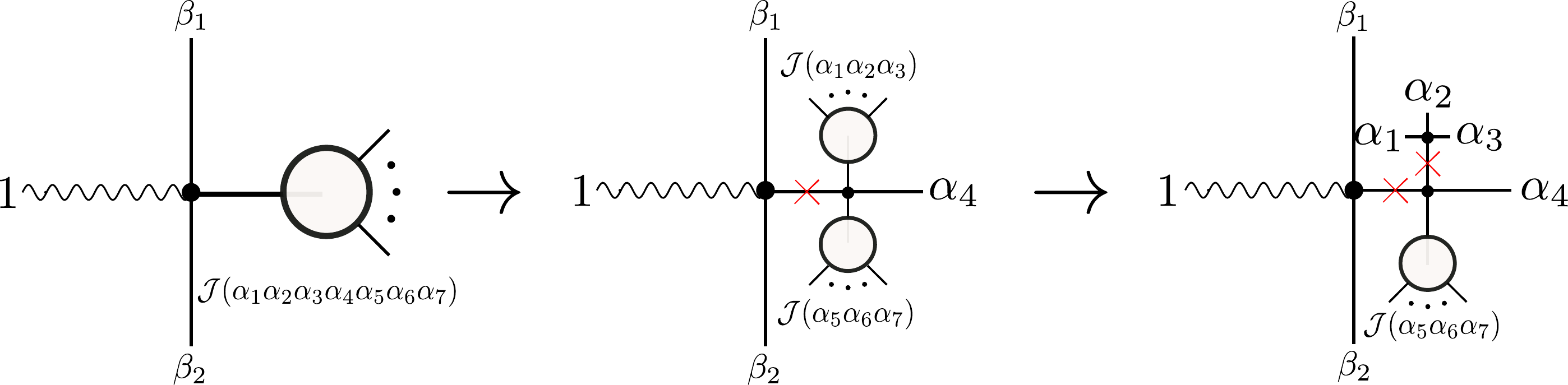}\\
  \caption{$\mathbf{1}_1\xrightarrow{\mathbf{2}_4}\mathbf{3}_1$}\label{F1-2''-3}
\end{figure}
As is illustrated in Fig.\ref{F1-3}, the direct contribution from $\mathbf{1}_1$ to $\mathbf{3}_1$ is,
\begin{equation}\label{1-3}
  \mathbf{1}_1\rightarrow\mathbf{3}_1=-s_{\alpha_1\cdots\alpha_7,\beta_1}\left(p_{\alpha_1}+p_{\alpha_3}+p_{\alpha_5}+p_{\alpha_6}+p_{\alpha_7}\right)^2
\end{equation}
\begin{figure}[h!]
  \centering
  \includegraphics[width=0.7\textwidth]{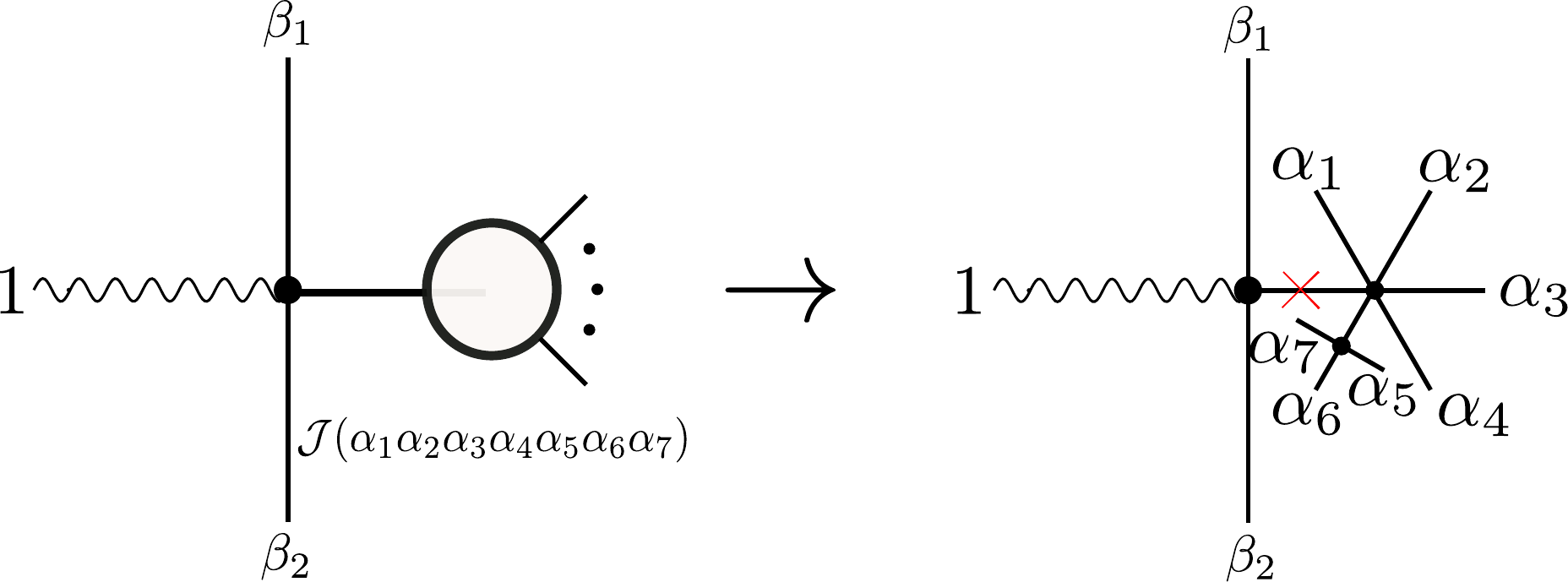}\\
  \caption{Direct contribution to $\mathbf{3}_1$ from $\mathbf{1}_1$, denoted by $\mathbf{1}_1\rightarrow\mathbf{3}_1$}\label{F1-3}
\end{figure}

Finally, combine (\ref{2-3}), (\ref{2-3'}), (\ref{1-2-3}), (\ref{1-2'-3}) and (\ref{1-3}), and we get the total contribution to $\mathbf{3}_1$ is quartic polynomial of external momenta,
\begin{equation}\label{}
  \begin{split}
  &(\mathbf{2}_{1,2,3}\rightarrow\mathbf{3}_1)+(\mathbf{1}_{1}\xrightarrow{\mathbf{2}_{1,2,3}}\mathbf{3}_1)+(\mathbf{1}_1\rightarrow\mathbf{3}_1)\\
  =&s_{\alpha_1\beta_2}\left(p_{\alpha_3}+p_{\alpha_5}+p_{\alpha_6}+p_{\alpha_7}\right)^2+(s_{\alpha_1,\beta_2}+2s_{\alpha_2\alpha_3\alpha_4,\beta_2})\left(p_{\alpha_2}+p_{\alpha_4}\right)^2\\
  &+(s_{\alpha_1\alpha_2\alpha_3\alpha_4,\beta_1}+s_{\alpha_1\cdots\alpha_7,\beta_2})\left(p_{\alpha_1}+p_{\alpha_3}\right)^2-s_{\alpha_1\cdots\alpha_7,\beta_1}\left(p_{\alpha_1}+p_{\alpha_3}+p_{\alpha_5}+p_{\alpha_6}+p_{\alpha_7}\right)^2\\
  \end{split}
\end{equation}

From this example, we found that if $\mathbf{i}_{j}>\mathbf{i'}_{j'}$, $\mathbf{i}_{j}$ would have several paths to contribute to $\mathbf{i'}_{j'}$. However, $s_{\alpha\beta}$ in (\ref{sum_coef}) cannot be used to kill propagators, so it remains in every step. Only the vertex has the possibility to kill the propagator in the sub-current, and thus there cannot be more than one propagator being killed at one vertex. Failure to kill one propagator in one step results in the failure to kill the propagator in the following step. As a result, only if $\mathbf{i'}_{j'}$ is a refined division of $\mathbf{i}_{j}$ with respect to only one sub-current in $\mathbf{i}_{j}$, $\mathbf{i}_{j}$ has the possibility to contribute to $\mathbf{i'}_{j'}$. In our example, $\mathbf{2}_{2}$, $\mathbf{2}_{4}$, $\mathbf{2}_{5}$ will not contribute to $\mathbf{4}_{1}$. This result motivates us to define a sub-order $\prec$ of the partial order $<$ defined in Definition \ref{partial order},
\begin{definition}~

\begin{enumerate}
  \item $A_{1}\cdots A_R$ and $A'_{1}\cdots A'_{R'}$ are two divisions of $\alpha_1\cdots\alpha_r$. We say $A'_{1}\cdots A'_{R'}\prec A_{1}\cdots A_R$, if for all $A_{I}\in\{A_1,\cdots,A_{R}\}$ except one, say $A_{I_1}$, $\exists~A'_{I'}\in\{A'_1,\cdots,A_{R'}\}$, s.t. $A_I=A'_{I'}$.
  \item $B_{1}\cdots B_S$ and $B'_{1}\cdots B'_{S'}$ are two divisions of $\beta_1\cdots\beta_s$. We say $B'_{1}\cdots B'_{S'}\prec B_{1}\cdots B_S$, if for all $B_{J}\in\{B_1,\cdots,B_{S}\}$ except one, say $B_{J_1}$, $\exists~B'_{J'}\in\{B'_1,\cdots,B_{S'}\}$, s.t. $B_J=B'_{J'}$.
  \item $A_{1}\cdots A_RB_{1}\cdots B_S$ and $A'_{1}\cdots A'_{R'}B'_{1}\cdots B'_{S'}$ are two divisions of $\alpha_1\cdots\alpha_r\beta_1\cdots\beta_s$. We say $A'_{1}\cdots A'_{R'}B'_{1}\cdots B'_{S'}\prec A_{1}\cdots A_RB_{1}\cdots B_S$, if one of the following conditions are satisfied,
      \begin{itemize}
        \item $A'_{1}\cdots A'_{R'}=A_{1}\cdots A_R$ and $B'_{1}\cdots B'_{S'}\prec B_{1}\cdots B_S$.
        \item $B'_{1}\cdots B'_{S'}=B_{1}\cdots B_S$ and $A'_{1}\cdots A'_{R'}\prec A_{1}\cdots A_R$
      \end{itemize}
\end{enumerate}
\end{definition}
In another words, only if $\mathbf{i}_{j}\succ\mathbf{i'}_{j'}$ holds, $\mathbf{i}_{j}$ has the possibility to contribute to $\mathbf{i'}_{j'}$. $\prec$ is not a partial order, because it is not transitive, but we only consider the paths where its transitivity holds. If $\mathbf{i}_{j}=A_1\cdots A_{L-1}A_LA_{L+1}\cdots B_S$, the general form of $\mathbf{i'}_{j'}$ to which $\mathbf{i}_{j}$ can contribute is illustrated in Fig. \ref{general}. That is $\mathbf{i'}_{j'}=A_1\cdots A_{L-1}\cdots A_{L1}\cdots A_{LM-1}A_{LM}\cdots A_{LM1}\cdots A_{LMN}\cdots A_{L+1}\cdots B_S$.
\begin{figure}[h!]
  \centering
  \includegraphics[width=0.8\textwidth]{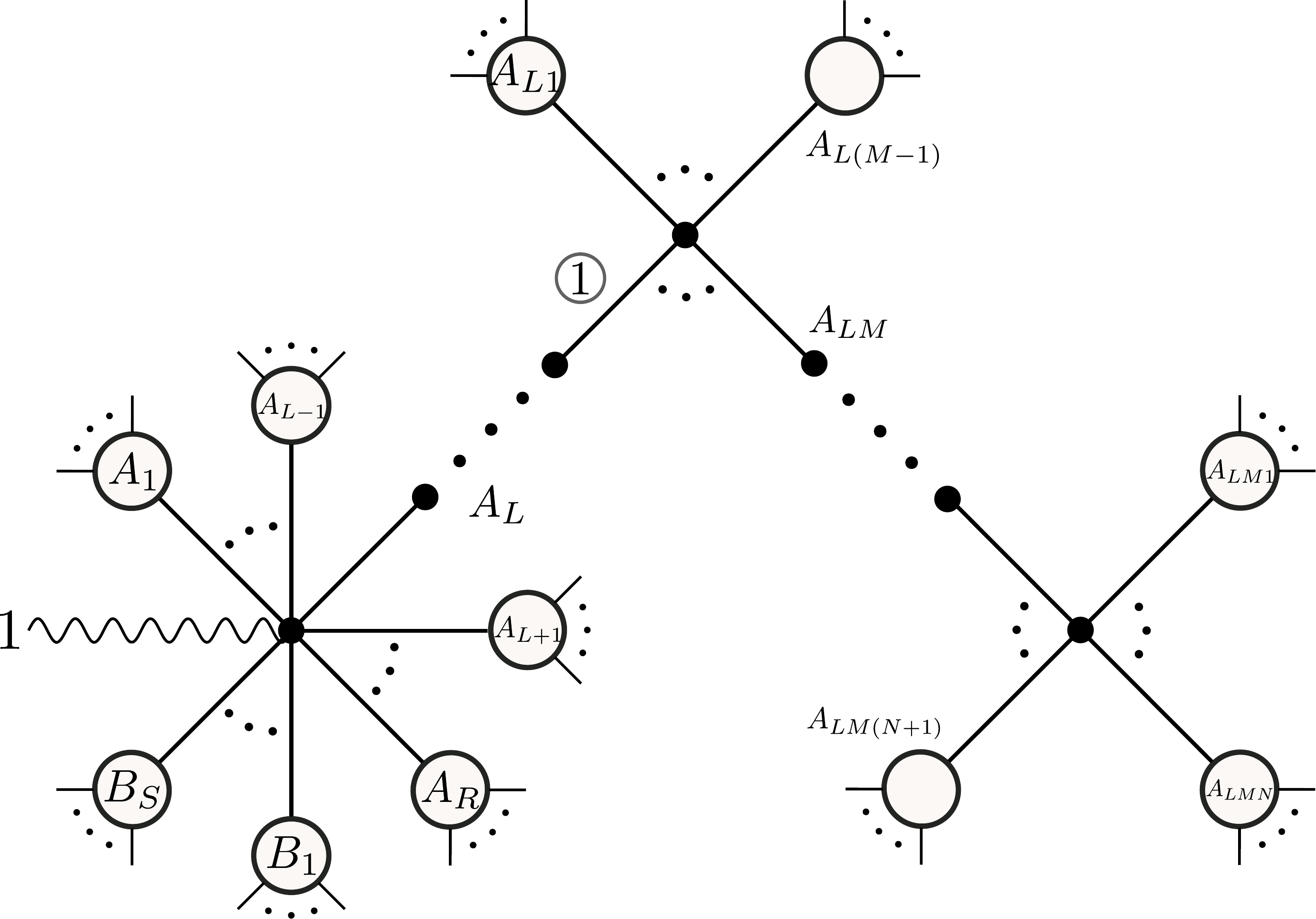}\\
  \caption{The diagram corresponding to the contribution from division $\mathbf{i}_{j}$ to division $\mathbf{i'}_{j'}$}\label{general}
\end{figure}

For each totally ordered path under $\prec$ in the Hasse diagram, there is a counterpart in the diagrams like Fig.\ref{general}. Let us see why it is true. After the propagator \ding{172} is killed, the coefficients are the product of $s_{\alpha\beta}$ and the vertex. If the momentum $P_{A_{LM}}$ is in the vertex, it will provide a $P_{A_{LM}}^2$ to kill the propagator $\frac{1}{P_{A_{LM}}^2}$ and go ahead to next step. Otherwise $\frac{1}{P_{A_{LM}}^2}$ will not be killed, and thus there is no contribution to $ \mathbf{i'}_{j'}$.   Whether $\frac{1}{P_{A_{LM}}^2}$ has been killed or not, the remainder in the vertex does not contain cross terms of $p$ in $A_{LM}$. So the remainder will not provide terms to kill $\frac{1}{P_{A_{LM}}^2}$ any more, which means all the terms which can kill the propagator have been considered. Now we know that the counterpart is just the path to kill the propagators successively.

Different paths correspond to different ways to kill the propagators, and thus different contributions to $\mathbf{i'}_{j'}$. Since there is a one-to-one correspondence between the totally ordered paths under $\prec$ and the paths of propagators to be killed, we can attribute each contribution to its corresponding diagram. Because all the diagrams are complete and not repetitive, we claim that all the contributions are complete and not repetitive.

Let us make a summary about the coefficients of each division in the revised BCJ relation. The coefficients of a specific division $\mathbf{i'}_{j'}$ can get contributions from divisions which are bigger than it under the order $\succ$. The contributions are quartic polynomials of external momenta, complete and not repetitive. Summing them up with the original part in $\mathbf{i'}_{j'}$ which cannot contribute to smaller divisions, we will obtain the final coefficients of the division $\mathbf{i'}_{j'}$, which are quartic polynomials of external momenta, complete and not repetitive.

\subsection{Sub-current recursion and the general form of the coefficients of the divisions without sub-currents}\label{subsec:CurrentRec}
A direct result of this argument is the following lemma,
\begin{lemma}\label{sub-currents recursion}
The coefficients of the divisions with sub-currents can be obtained recursively from the coefficients of the divisions without sub-currents.
\end{lemma}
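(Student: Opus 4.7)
The plan is to combine the path-based decomposition of subsection~\ref{subsec:Complet} with the Berends-Giele factorization at the root vertex containing the off-shell leg $1$. For any division $D$ with a non-trivial sub-current $\mathcal{J}(A_I)$ (where $|A_I|>1$), the contributions to its coefficient come from chains of refinements $D_0 \succ D_1 \succ \cdots \succ D_t = D$ in the Hasse diagram. Because $D$ already fixes the internal partition of each non-singleton part, no refinement step in a valid chain may split such a part further; otherwise the chain would terminate at a division strictly smaller than $D$ under $\prec$.

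First, I would argue that this constraint forces every non-singleton sub-current to act as a passive spectator throughout the chain. The vertex factor $V(m, div, \bar{\sigma})$ in (\ref{arrive}) depends only on the total momenta entering the root vertex, so for each chain the sub-currents $\mathcal{J}(A_I), \mathcal{J}(B_J)$ enter the kinematics only through their collective momenta $P_{A_I}, P_{B_J}$. Similarly, by the split in (\ref{div_S}), any internal BCJ coefficient $S_k$ has already been absorbed into $\mathcal{J}(A_I)$ via the recursion hypothesis (\ref{J_k-BCJ}); the only piece surviving outside is $\bar{S}$, which depends on $\alpha_i, \beta_j$ only through the part to which each belongs.

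Second, I would formalize the collapse of each non-trivial sub-current into an effective single leg carrying the total momentum $P_{A_I}$ (or $P_{B_J}$), producing a reduced problem on $R+S$ effective external particles in which the target division $D^{*}$ has no sub-currents at all. The central claim is that the coefficient of $D$ in the original problem coincides with the coefficient of $D^{*}$ in the reduced problem, with the original Mandelstam invariants replaced by their collapsed counterparts $s_{A_I B_J}$, up to the overall factor $\left(1/2F^2\right)^{(R+S-1)/2}\prod_I \mathcal{J}(A_I)\prod_J \mathcal{J}(B_J)$ already singled out in (\ref{RBCJ-J}). Iterating this collapse eliminates the non-singleton parts one at a time and expresses the coefficient of $D$ purely in terms of coefficients of divisions without sub-currents, evaluated on appropriately reduced kinematic data.

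The main obstacle will be verifying the collapse at the level of individual paths: one must exhibit a bijection between totally ordered chains under $\prec$ that contribute to $D$ in the original problem and totally ordered chains that contribute to $D^{*}$ in the reduced problem, and check that at each step the momentum-squared factor responsible for killing a propagator matches on both sides. The quartic-polynomial structure established in Lemma~\ref{polynomial} together with momentum conservation on each sub-current should make this match automatic, but it will require carefully tracking which $s_{\alpha_i\beta_j}$ survive inside $\bar{S}$ versus which get rerouted into the sub-current coefficients by the recursion hypothesis. Once the bijection is in place, the lemma follows by finite induction on the number of non-singleton parts of $D$.
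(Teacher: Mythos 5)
Your overall strategy coincides with the paper's: collapse every non-singleton part into an effective off-shell leg, so that the coefficient of a division with sub-currents is recovered from the no-sub-current coefficient of the reduced $(R+S)$-leg problem with collapsed momenta, relying on the path analysis of subsection~\ref{subsec:Complet} for completeness and non-repetitiveness. The gap is that your ``central claim'' — that the coefficient of $D$ equals the no-sub-current coefficient of the reduced problem with $s_{\alpha_i\beta_j}\to s_{A_IB_J}$ — is exactly the content of the lemma, and you defer its verification to a path-by-path bijection that you only flag as ``the main obstacle.'' The paper closes this step with a short structural argument that your proposal does not contain: since the no-sub-current coefficient is a quartic polynomial (Lemma~\ref{polynomial}) whose on-shell value is $S_0p_1^2$, its extension with the effective legs off shell can differ from $S_0p_1^2$ only by terms $\sum_i a_i p_{\alpha_i}^2+\sum_j b_j p_{\beta_j}^2$; after substituting sub-currents these are precisely the pieces that cancel the propagators $1/P_{A_I}^2$, $1/P_{B_J}^2$ and migrate to finer divisions, while the $S_0p_1^2$ part with collapsed momenta is what survives as the coefficient of the division at hand. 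No new bijection needs to be built — the bookkeeping you anticipate is already supplied by the totally ordered paths of subsection~\ref{subsec:Complet}.

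Moreover, the premise you use to motivate the collapse is not correct as stated: a non-singleton part $A_I$ of the target division is \emph{not} in general a ``passive spectator throughout the chain,'' because it may be strictly contained in a larger part of an earlier division and only emerge when that part's propagator is killed and its internal permutation sum is resolved via the recursion hypothesis (\ref{J_k-GU(1)})--(\ref{J_k-BCJ}). This happens already in the paper's example $\mathbf{1}_1\xrightarrow{\mathbf{2}_1}\mathbf{3}_1$, where the part $\alpha_5\alpha_6\alpha_7$ of $\mathbf{3}_1$ is created in the last step rather than carried along from the start. What is true is only that, once a part of $D$ appears, it is never split again; that it eventually enters only through $P_{A_I}$ and the unit $\mathcal{J}(A_I)$ is a consequence of the recursion hypothesis applied inside the larger sub-currents, not of spectator-hood, so your proposed bijection would have to be reorganized around that input together with the on-shell-limit argument above.
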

\begin{proof}
If we know the coefficients of the divisions without sub-currents is
\begin{equation}\label{VS_0'}
  \mathcal{VS}_{0'}\equiv\mathcal{VS}_{\mathbf{r+s-2}_{j_{r+s-2}}'}=S_0p_1^2,
\end{equation}
then $\mathcal{VS}_{0'}$ with all the external line off-shell must have the form
\begin{equation}
  S_0p_1^2+\Sl_{i=1}^{r}a_i p^2_{\alpha_i}+\sum\limits_{j=1}^{s}b_j p^2_{\beta_j}.
\end{equation}
This is because when we take the limit of all the $\alpha$ and $\beta$ being on-shell, it must be reduced to (\ref{VS_0'}). After we substitute those off-shell lines with sub-currents, the terms proportional to $p^2_{A_i}$ will offset the propagators of sub-current $\mathcal{J}(\{A_i\})$. Then such term will contribute to finer divisions. Finally we can get the part with sub-currents,
\begin{eqnarray}\label{}
  \mathcal{A}_{RBCJ}(r,s)=&\Sl_{{divisions\{\alpha,\beta\}\atop{{\abs{R-S}=1}\atop{R+S<r+s}}}}p_{1}^2S_{div}\mathcal{J}(\{A_{1}\})\cdots \mathcal{J}(\{A_{R}\}) \mathcal{J}(\{B_{1}\})\cdots \mathcal{J}(\{B_{S}\}) \nb\\
  &+\mathcal{VS}_{0'}\mathcal{J}(\alpha_1)\cdots \mathcal{J}(\alpha_r) \mathcal{J}(\beta_1)\cdots \mathcal{J}(\beta_s)
\end{eqnarray}
\end{proof}

Thus our next aim is to prove the form of $\mathcal{VS}_{0'}$. We have the following claim to simplify our proof. Since we have known the result of the on-shell general BCJ relation, we can claim

\paragraph{Claim} $\mathcal{VS}_{0'}$ must have at least one factor of $p_1^2$ if it does not vanish.

If it vanishes, it also makes sense to say that it has a factor of $p_1^2$. While if it does not vanish, the claim can be shown easily by \textit{reductio ad absurdum}: If the whole amplitude does not have a factor of $p_1^2$, the relation will contradict to the on-shell case when we take the on-shell limit $p_1^2\rightarrow 0$.

With this claim, we can easily obtain that $\mathcal{VS}_{0'}$ must have the form $p_1^2S_0$, where the general form of $S_0$ is
\begin{equation}\label{}
  S_0=c^{ij}s_{\alpha_i\beta_j}+\bar{c}^{ij}s_{\alpha_i\alpha_j}+\bar{\bar{c}}^{ij}s_{\beta_i\beta_j} ,\ c^{ij}\in\mathbb{Z},\ i=1,\cdots,r\ \mathrm{and}\ j=1,\cdots, s\footnote{The tensor notation $c^{ij}$ refers to an element in the matrix $\{c^{ij}\}_{\substack{i=1,\cdots,r\\j=1,\cdots, s}}$, and boldface $\mathbf{C}(r,s)$ refers to the corresponding matrix. We will keep using this convention in the rest of this paper.}
\end{equation}
The term $s_{p_1*}$ does not appear because we use the convention to read the coefficients backwards(with respect to the general BCJ relation). Yet we have a much stronger constrain,

\begin{lemma}\label{$A_{RBCJ2}$}
The result of $\mathcal{A}_{RBCJ}^\mathrm{II}(r,s)$ only contains coefficients $s_{\alpha_i\beta_j}$, i.e. no $s_{\alpha_i\alpha_j}$ or $s_{\beta_i\beta_j}$.
\end{lemma}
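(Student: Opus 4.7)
The plan is to prove the lemma by evaluating both sides of the revised BCJ relation on a kinematic locus where every term of the LHS vanishes by construction, then extracting constraints on the RHS from the linear independence of sub-current products indexed by distinct divisions.

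First, consider the locus $\mathcal{L}$ defined by $p_{\alpha_i}\cdot p_{\beta_j}=0$ for every pair $(i,j)$, so that $s_{\alpha_i\beta_j}=0$ identically. This locus is compatible with momentum conservation and with the on-shell conditions $p_{\alpha_i}^2=p_{\beta_j}^2=0$ in sufficiently large spacetime dimension, and on it $p_1^2|_{\mathcal{L}}=P_{\mathbf{\alpha_r}}^2+P_{\mathbf{\beta_s}}^2$ remains a generically nonzero linear combination of $s_{\alpha_i\alpha_j}$'s and $s_{\beta_i\beta_j}$'s. The BCJ prefactor $\sum_i\sum_{\xi_{\beta_j}>\xi_{\alpha_i}}s_{\alpha_i\beta_j}$ in every summand of $\mathcal{A}_{RBCJ}^{\mathrm{II}}(r,s)$ vanishes on $\mathcal{L}$ regardless of $\sigma$, so $\mathcal{A}_{RBCJ}^{\mathrm{II}}(r,s)|_{\mathcal{L}}\equiv 0$.

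Next, by Lemmas \ref{polynomial} and \ref{sub-currents recursion} together with the general form $S^{\mathrm{II}}_{div}=c^{ij}_{div}s_{\alpha_i\beta_j}+\bar{c}^{ij}_{div}s_{\alpha_i\alpha_j}+\bar{\bar{c}}^{ij}_{div}s_{\beta_i\beta_j}$ derived just above, the RHS of the revised BCJ relation for $\mathcal{A}_{RBCJ}^{\mathrm{II}}(r,s)$ reduces on $\mathcal{L}$ to
\begin{equation*}
\sum_{div}\left(\frac{1}{2F^{2}}\right)^{\frac{R+S-1}{2}}p_1^2\bigl(\bar{c}^{ij}_{div}s_{\alpha_i\alpha_j}+\bar{\bar{c}}^{ij}_{div}s_{\beta_i\beta_j}\bigr)\prod_I\mathcal{J}(A_I)\prod_J\mathcal{J}(B_J)=0.
\end{equation*}
Each $\mathcal{J}(A_I)$ depends only on the $\alpha$-momenta in $A_I$ and each $\mathcal{J}(B_J)$ only on the $\beta$-momenta in $B_J$, so the sub-current product factorizes across the $\alpha$- and $\beta$-partitions. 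Different partitions produce distinct factorization-pole structures, since each non-singleton block $A_I$ (resp.\ $B_J$) contributes a pole at $P_{A_I}^2=0$ (resp.\ $P_{B_J}^2=0$) specific to that partition; hence the sub-current products indexed by distinct divisions are linearly independent rational functions on $\mathcal{L}$. The vanishing of the sum therefore forces each division's coefficient to vanish separately: $p_1^2(\bar{c}^{ij}_{div}s_{\alpha_i\alpha_j}+\bar{\bar{c}}^{ij}_{div}s_{\beta_i\beta_j})=0$ on $\mathcal{L}$.

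Finally, since $p_1^2$ does not vanish identically on $\mathcal{L}$ and the $s_{\alpha_i\alpha_j}$'s and $s_{\beta_i\beta_j}$'s remain algebraically independent free parameters there, we conclude $\bar{c}^{ij}_{div}=\bar{\bar{c}}^{ij}_{div}=0$ for every division, which establishes the lemma. The main obstacle is the linear-independence step, which must be justified partition by partition; the cleanest route is to argue by iterated residues at the hyperplanes $P_{A_I}^2=0$ and $P_{B_J}^2=0$, where a specified set of residues selects a unique division and so allows coefficients to be extracted one division at a time.
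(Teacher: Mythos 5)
Your strategy (set all $s_{\alpha_i\beta_j}$ to zero so the left-hand side vanishes term by term, then read off constraints on the right-hand side) is genuinely different from the paper's, but as written it has two concrete gaps. First, the locus $\mathcal{L}$ does not exist in the form you describe: for \emph{real} massless momenta in any spacetime dimension, $p_{\alpha_i}\cdot p_{\beta_j}=0$ with both vectors null forces the two momenta to be proportional, so imposing it for every pair collapses all external momenta onto a single null ray and kills the $s_{\alpha_i\alpha_j}$ and $s_{\beta_i\beta_j}$ as well. The substitution $s_{\alpha_i\beta_j}\to 0$ is legitimate only if you complexify the momenta, or equivalently treat the invariants $\{s_{\alpha_i\alpha_j},s_{\alpha_i\beta_j},s_{\beta_i\beta_j}\}$ as algebraically independent variables (true for large enough $D$) and set the $s_{\alpha_i\beta_j}$ to zero formally; one must also check that no propagator degenerates under this substitution, which does hold here only because the nonlinear sigma model has no two-particle channels, so every pole $1/P_I^2$ involves a set $I$ with at least two legs of the same type. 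Second, and more seriously, the linear-independence step you yourself call the main obstacle is not established, and the proposed repair is incorrect as stated: iterated residues at $P_{A_I}^2=0$, $P_{B_J}^2=0$ do \emph{not} select a unique division, because the current of a coarser division contains exactly those sub-poles internally (e.g.\ $\mathcal{J}(\alpha_1\cdots\alpha_5)$ has terms carrying a $1/P_{\alpha_1\alpha_2\alpha_3}^2$ pole), so every division lying above the chosen one in the partial order contributes to the same residue. At best one can organize an induction down the Hasse diagram, which is precisely the bookkeeping the paper sets up in its Section 5.3 and which your sketch bypasses.

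Both gaps are repairable, and in fact the independence machinery can be avoided altogether: within the paper's recursion the coefficients of divisions \emph{with} sub-currents are inherited from smaller cases and hence already consist only of $s_{\alpha_i\beta_j}$'s, so they vanish identically under the substitution; the vanishing of the left-hand side then directly forces $p_1^2\bigl(\bar{c}^{ij}s_{\alpha_i\alpha_j}+\bar{\bar{c}}^{ij}s_{\beta_i\beta_j}\bigr)=0$ for the single remaining no-sub-current coefficient $S_0$, giving $\bar{c}^{ij}=\bar{\bar{c}}^{ij}=0$ without any residue analysis. With these repairs your argument would be a valid alternative to the paper's proof, which proceeds quite differently: it expands $p_1^2$, argues diagrammatically that no $s_{\beta_i\beta_j}^2$ term can ever be generated in $\mathcal{A}_{RBCJ}(r,s)$ (so no $s_{\beta_i\beta_j}$ appears), and then excludes $s_{\alpha_i\alpha_j}$ by the $\alpha\leftrightarrow\beta$ ``dotted'' symmetry combined with the relation \eqref{BCJ-GU(1)} to the generalized $U(1)$-decoupling identity.
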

\begin{proof}
We can expand $p_1^2$ in $\mathcal{A}_{RBCJ}(r,s)$, so no $s_{\beta_i\beta_j}$ means that no $s_{\beta_i\beta_j}^2$ in $\mathcal{A}_{RBCJ}(r,s)$. Each diagram in $\mathcal{A}_{RBCJ}(r,s)$ contains coefficients $s_{\alpha_i\beta_j}$ and $s_{\alpha_i\alpha_j}$. The summation of some diagrams will cancel some sub-currents by our recursion hypothesis. In these processes, cancellation of coefficients $s_{\alpha_i*}$ will create a new coefficient $s_{\alpha_i*}$. As a result, $s_{\beta_i\beta_j}^2$ does not appear, so there is no coefficient $s_{\beta_i\beta_j}$ in $\mathcal{A}_{RBCJ}(r,s)$, thus neither in $\mathcal{A}_{RBCJ}^\mathrm{II}(r,s)$.

Similarly, $\mathcal{A}_{RBCJ}(r,\dot{s})$ can be divided into $\mathcal{A}_{RBCJ}^\mathrm{I}(r,\dot{s})=\left(\Sl_{j=1}^s\beta_j\right)^2{\mathcal{A}_{GU}}(r,\dot{s})$ and $\mathcal{A}_{RBCJ}^\mathrm{II}(r,\dot{s})$. If $\mathcal{A}_{RBCJ}^\mathrm{II}(r,s)$ contains terms with coefficients $s_{\alpha_i\alpha_j}$, $\mathcal{A}_{RBCJ}^2(r,\dot{s})$ must contain terms with coefficients $s_{\beta_i\beta_j}$ for symmetry. However, use (\ref{BCJ-GU(1)}) and we can get,
\begin{equation}\label{}
  \mathcal{A}_{RBCJ}^\mathrm{II}(r,s)+\mathcal{A}_{RBCJ}^\mathrm{II}(r,\dot{s})=\Sl_{\sigma\in OP(\mathbf{\alpha_r}\bigcup\mathbf{\beta_s})}\left(\Sl_{i=1}^r\Sl_is_{\alpha_l\beta_i}\right){\mathcal{A}_{GU}}(r,s).
\end{equation}
which means the sum of $\mathcal{A}_{RBCJ}^\mathrm{II}(r,s)$ and $\mathcal{A}_{RBCJ}^\mathrm{II}(r,\dot{s})$ does not contain terms with coefficients $s_{\alpha_i\alpha_j}$ or $s_{\beta_i\beta_j}$. Thus if $\mathcal{A}_{RBCJ}^\mathrm{II}(r,s)$ contains coefficients $s_{\alpha_i\alpha_j}$, it must contain coefficients $-s_{\beta_i\beta_j}$ to cancel the corresponding coefficients caused by $\mathcal{A}_{RBCJ}^\mathrm{II}(r,\dot{s})$, which contradicts with our previous result.
\end{proof}


\subsection{The coefficient matrix $\mathbf{C}(r,s)$}\label{subsec:COM}
\begin{prop}
The coefficient matrix $\mathbf{C}(r,s)$ for $S_0(r,s)$ is
\begin{eqnarray}
&&c^{ij}=\left\{
  \begin{array}{ll}
  1,\ if\ i<j \\
  0,\ otherwise
  \end{array}
  \right.,\ if\ r-s=-1, \\
&&c^{ij}=\left\{
  \begin{array}{ll}
  1,\ if\ i\leqslant j \\
  0,\ otherwise
  \end{array}
  \right.,\ if\ r-s=1, \\
&&c^{ij}=0,\ \ \ \ \ \ \ \ \ \ \ \ \ \ \ \ \ \ \ otherwise.
\end{eqnarray}
\end{prop}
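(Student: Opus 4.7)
The plan is to prove the formula for $\mathbf{C}(r,s)$ by induction on $n=r+s$, using the reductions established in the previous subsections. By Lemma \ref{sub-currents recursion} the revised BCJ relation at level $(r,s)$ is fixed once $\mathcal{VS}_{0'}$ is known, and by Lemma \ref{$A_{RBCJ2}$} together with the Claim that $\mathcal{VS}_{0'}$ contains a factor of $p_1^2$, only the entries $c^{ij}$ of the $s_{\alpha_i\beta_j}$ sector remain to be determined. Since every non-vanishing current has an even number of legs, $r+s$ is forced to be odd, so $r-s$ is odd as well; the proposition therefore naturally splits into the three cases $r-s=-1$, $r-s=1$, and $|r-s|\geq 3$.

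For $|r-s|\geq 3$, I would use the $U(1)$-consistency relation $\mathcal{A}_{RBCJ}^{\mathrm{II}}(r,s)+\mathcal{A}_{RBCJ}^{\mathrm{II}}(r,\dot{s})=\left(\sum_{l,i} s_{\alpha_l\beta_i}\right)\mathcal{A}_{GU}(r,s)$ derived in the proof of Lemma \ref{$A_{RBCJ2}$}. The no-sub-current piece of the right-hand side is fully determined by the generalized $U(1)$ identity on smaller currents, and when the imbalance $|r-s|$ is this large the expansion of $\mathcal{A}_{GU}(r,s)$ has no uncancelled no-sub-current residue at all, so the sum of the matrix $c^{ij}$ with its dotted counterpart vanishes. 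Combining this identity with the $\alpha\leftrightarrow\beta$ exchange symmetry between the two conventions then forces each matrix to vanish separately.

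For $r-s=\pm 1$, the strategy is to track, for each fixed $(i,j)$, the Feynman graphs in the Berends-Giele expansion that generate an $s_{\alpha_i\beta_j}$ contribution to the no-sub-current coefficient. By the Hasse-diagram bookkeeping of subsection \ref{subsec:Complet}, every totally ordered path under $\prec$ corresponds uniquely to a sequence of propagator killings, and the surviving path for a given $s_{\alpha_i\beta_j}$ is precisely the one in which $\alpha_i$ precedes $\beta_j$ in the ordered permutation. With the convention $-\sum_{\xi_{\sigma_k}>\xi_{\alpha_i}} s_{\alpha_i\sigma_k}$, this gives the strict inequality $i<j$ for $r-s=-1$ and the weak inequality $i\leq j$ for $r-s=1$, the shift arising because the extra $\alpha$ leg in the latter case enables one additional vertex cancellation at the vertex attached to line $1$. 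The base of the induction is the fundamental BCJ case $r=1$ from \cite{Chen:2013fya}, which already exhibits the $i\leq j$ pattern.

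The main obstacle I anticipate is the combinatorial bookkeeping required to confirm that the various Feynman diagrams contributing to a fixed $s_{\alpha_i\beta_j}$ telescope to a single integer $c^{ij}\in\{0,1\}$. Lemma \ref{polynomial} rules out higher-degree polynomial contributions and subsection \ref{subsec:Complet} controls over-counting, but turning these into a clean ``exactly one'' count requires a delicate matching of the permutation sum against the vertex structure. A more efficient route would be to adopt the claimed form of $\mathbf{C}(r,s)$ as an ansatz, substitute it into equation (\ref{BCJ-GU(1)}) and into the equivalence (\ref{RBCG-GBCG}) between revised and general BCJ, and verify consistency using the on-shell limit $p_1^2\to 0$ together with the recursion hypothesis; since all these relations together uniquely pin down the no-sub-current coefficient, this will confirm the proposition.
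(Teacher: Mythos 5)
There is a genuine gap: your proposal never actually determines the entries $c^{ij}$, nor establishes uniqueness, which is the whole content of the proposition. For $|r-s|\geq 3$ you invoke the identity relating $\mathcal{A}_{RBCJ}^\mathrm{II}(r,s)+\mathcal{A}_{RBCJ}^\mathrm{II}(r,\dot{s})$ to $\mathcal{A}_{GU}(r,s)$, but that identity only constrains the \emph{sum} $c^{ij}+\dot{c}^{ij}$, and the $\alpha\leftrightarrow\beta$ exchange merely maps one undetermined matrix into the other dotted matrix with $(r,s)\to(s,r)$ (still satisfying $|r-s|\geq 3$), so it does not force each matrix to vanish separately. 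For $r-s=\pm 1$ the crucial step --- that for each fixed $(i,j)$ the Hasse-diagram paths telescope to a single surviving contribution, present exactly when $i<j$ (resp.\ $i\leq j$) --- is asserted rather than derived; you flag this count yourself as the main obstacle. The fallback ``ansatz plus consistency'' route presupposes that the relations you list (the revised/general BCJ equivalence, the $U(1)$ consistency, the on-shell limit) uniquely pin down the $rs$ unknowns in the no-sub-current coefficient, but that uniqueness is precisely what must be proven and does not follow from those relations alone.

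For comparison, the paper closes both gaps by a linear-algebra argument that your proposal lacks. Relabeling $\alpha_1$ among $\mathbf{\alpha_{r-1}}$ (and $\beta_1$ among $\mathbf{\beta_{s-1}}$) produces inhomogeneous linear systems $\mathbf{R}\mathbf{C}(r,s)=\boldsymbol{\mathrm{D}_{\alpha}}(r,s)$ and $\mathbf{C}(r,s)\mathbf{S}=\boldsymbol{\mathrm{D}_{\beta}}(r,s)$, whose right-hand sides are computed recursively by evaluating the same permutation sum from two viewpoints (treating the permuted $\alpha_1$ as a $\beta$, or permuting $\beta_1$ through $\mathbf{\alpha_r}\cup\mathbf{\beta_{s-1}}$ and applying the generalized $U(1)$-decoupling identity and the fundamental BCJ relation); these $\boldsymbol{\mathrm{D}}$ matrices vanish unless $r-s=\pm 1$, which is how the case $|r-s|\geq 3$ is actually handled. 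Since $\mathbf{R}$ and $\mathbf{S}$ have corank one, the solution is fixed up to a single parameter $n$ multiplying $(-1)^{i+j}\binom{r-1}{i-1}\binom{s-1}{j-1}$, and the separate diagrammatic lemma $c^{r1}=0$ (no term $s_{\alpha_r\beta_1}s_{\alpha_r\beta_1}$ can arise in the no-sub-current part) kills $n$ and yields uniqueness. Some analogue of these two ingredients --- enough independent linear constraints plus one boundary entry --- is what your argument would need to become a proof.
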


\begin{proof}
In order to proof this, we need to introduce another permutation $\tau\in OP(\{\alpha_1\}\bigcup\{\mathbf{\alpha_{r-1}}\})$.
Notice that it is just to rename the $\alpha$'s. Sum them up, and denote summation of $c_{\alpha}^{ij}$ by $d_{\alpha}^{ij}$. Consider $c^{ij}$ with a fixed $i$ and an arbitrary $j$, so there are $(i-1)$ cases where $\alpha_1$ is before $\alpha_i$ and $(r-i+1)$ cases after. Therefore we have
\begin{equation}\label{R}
  \left\{
  \begin{array}{ll}
  d_{\alpha}^{1j}\equiv\ \ \ \ \ \ \ \ \ \ \ \ \sum\limits_{i=1}^rc^{ij}, \ \ \ \ \ \ \ \ \ \ \ \ \ \ \ \ j=1,\cdots,s,\\
  d_{\alpha}^{ij}\equiv(i-1)c^{ij}+(r-i+1)c^{i-1j},\ i=2,\cdots,r\ \mathrm{and}\ j=1,\cdots,s.
  \end{array}
  \right.
\end{equation}
or more explicitly,
\begin{equation*}\label{}
  \boldsymbol{\mathrm{D}_{\alpha}}(r,s)\equiv\mathbf{RC}(r,s),
\end{equation*}
where we denote the transformation matrix by $\mathbf{R}$,
\begin{equation*}
\mathbf{R}=
\begin{pmatrix}
1 & 1 & 1 & \cdots & 1 & 1 & 1\\
r-1 & 1 & 0 & \cdots & 0 & 0 & 0\\
0 & r-2 & 2 & \cdots & 0 & 0 & 0\\
\vdots & \vdots & \vdots & \ddots & \vdots & \vdots & \vdots\\
0 & 0 & 0 & \cdots & 2 & r-2 & 0\\
0 & 0 & 0 & \cdots & 0 & 1 & r-1\\
\end{pmatrix}
\end{equation*}

$\mathbf{R}$ is a matrix with rank $(r-1)$, so we can get the general solution by solving the homogeneous part
\begin{equation}\label{}
  \mathbf{RC}(r,s)=\mathbf{0}
\end{equation}
The solution is
\begin{equation}\label{c_a}
  c^{ij}=n_{\alpha}^j(-1)^i{r-1\choose i-1},\ i=2,\cdots,r.
\end{equation}

The same is for $\beta$, and we can get
\begin{equation}\label{S}
  \left\{
  \begin{array}{ll}
  d_{\beta}^{i1}\equiv \ \ \ \ \ \ \ \ \ \ \  \sum\limits_{j=1}^sc^{ij},\ \ \ \ \ \ \ \ \ \ \ \ \ \ \ \ \ \ \ i=1,\cdots,r,\\
  d_{\beta}^{ij}\equiv(j-1)c^{ij}+(s-j+1)c^{ij-1},\ j=2,\cdots,s\ \mathrm{and}\ i=1,\cdots,r.
  \end{array}
  \right.
\end{equation}
If we denote this transformation matrix by $\mathbf{S}$, then we have
\begin{equation*}\label{}
  \boldsymbol{\mathrm{D}_{\beta}}(r,s)\equiv\mathbf{C}(r,s)\mathbf{S}.
\end{equation*}
And the solution for the homogeneous part is
\begin{equation}\label{c_b}
  c^{ij}=n_{\beta}^i(-1)^j{s-1\choose j-1}
\end{equation}
Combine (\ref{c_a}) and (\ref{c_b}), and we arrive at the general solution
\begin{equation}\label{g}
  c_g^{ij}=n(-1)^{i+j}{r-1\choose i-1}{s-1\choose j-1}
\end{equation}

Now let us deal with the special solution of (\ref{R}) and (\ref{S}). Since the size of $\boldsymbol{\mathrm{D}}(r,s)$ is a function of $(r-s)$, let us denote it by $\boldsymbol{\mathrm{D}^{(r-s)}}(r,s)$.
Even though we do not need to know the expression of $c_{\alpha}^{ij}$, we can calculate $d_{\alpha}^{ij}$ by viewing it in a different way. When an $\alpha$ is permuted among $\mathbf{\alpha_{r-1}}$, we can also think of it as a $\beta$ permuted among $\mathbf{\beta_s}$, since we do not consider the coefficients brought by it. We will calculate the terms without sub-currents from both points of view, compare them, and solve $\boldsymbol{\mathrm{D}^{(r-s)}}(r,s)$. In this part, we assume that $\tau_1\in OP(\{\alpha_1\}\bigcup\{\mathbf{\alpha_{r-1}}\})$, $\tau_2\in OP(\{\alpha_1\}\bigcup\{\mathbf{\beta_s}\})$ without other notations.

From the first point of view,
\begin{equation}\label{b1}
  \begin{split}
  &-\Sl_{\tau_1}\left(\Sl_{\sigma\in OP(\mathbf{\alpha_r}\bigcup\mathbf{\beta_s})}\left(\Sl_{i=1}^r\Sl_{\xi_{\sigma_k}>\xi_{\alpha_i}}s_{\alpha_i\sigma_k}\right)\mathcal{A}(1,\{\sigma\})\right)\\
  =&\Sl_{\tau_1}\left(\Sl_{div\{\mathbf{\alpha_r},\mathbf{\beta_s}\}}   \left(\frac{1}{2F^2}\right)^{\frac{R+S-1}{2}}S_{div\{\mathbf{\alpha_r},\mathbf{\beta_s}\}}\mathcal{J}(A_{1})\cdots \mathcal{J}(A_R)\mathcal{J}(B_{1})\cdots \mathcal{J}(B_S)\right).
  \end{split}
\end{equation}

From the second point of view, take $\alpha_1$ as a member of $\mathbf{\beta_s}$,
\begin{equation}\label{b1}
  \begin{split}
  &-\Sl_{\tau_2}\left(\Sl_{\sigma\in OP(\mathbf{\alpha_{r-1}}\bigcup\mathbf{\beta_{s+1}})}\left(\Sl_{i=2}^r\Sl_{\xi_{\sigma_k}>\xi_{\alpha_i}}s_{\alpha_i\sigma_k}\right)\mathcal{A}(1,\{\sigma\})\right)\\
  =&\Sl_{\tau_2}\left(\Sl_{div\{\mathbf{\alpha_{r-1}},\mathbf{\beta_{s+1}}\}}   \left(\frac{1}{2F^2}\right)^{\frac{R'+S'-1}{2}}S_{div\{\mathbf{\alpha_{r-1}},\mathbf{\beta_{s+1}}\}}\mathcal{J}(A'_{1})\cdots \mathcal{J}(A'_{R'})\mathcal{J}(B'_{1})\cdots \mathcal{J}(B'_{S'})\right).
  \end{split}
\end{equation}
After eliminating the coefficients caused by $\alpha_1$ in the first point of view, which can be evaluated by the fundamental BCJ relation, we arrive at the following equation,
\begin{equation}\label{b1}
  \begin{split}
  &\Sl_{\tau_1}\left(\Sl_{div\{\mathbf{\alpha_r},\mathbf{\beta_s}\}}   \left(\frac{1}{2F^2}\right)^{\frac{R+S-1}{2}}\left(S_{div\{\mathbf{\alpha_r},\mathbf{\beta_s}\}}-S_{\alpha_1\ast}\right)\mathcal{J}(A_{1})\cdots \mathcal{J}(A_R)\mathcal{J}(B_{1})\cdots \mathcal{J}(B_S)\right)\\
  =&\Sl_{\tau_2}\left(\Sl_{div\{\mathbf{\alpha_{r-1}},\mathbf{\beta_{s+1}}\}}   \left(\frac{1}{2F^2}\right)^{\frac{R'+S'-1}{2}}S_{div\{\mathbf{\alpha_{r-1}},\mathbf{\beta_{s+1}}\}}\mathcal{J}(A'_{1})\cdots \mathcal{J}(A'_{R'})\mathcal{J}(B'_{1})\cdots \mathcal{J}(B'_{S'})\right).
  \end{split}
\end{equation}
where $S_{\alpha_1\ast}$ is the coefficient caused by $\alpha_1$ in $S_{div\{\mathbf{\alpha_r},\mathbf{\beta_s}\}}$, and happens to be the same as calculated from the fundamental BCJ relation.

After the permutation $\tau$, $\alpha_1$ will offset the sub-currents $\mathcal{J}(A_I)$ in the left side of the equation and the sub-currents $\mathcal{J}(B_J)$ in the right by the $U(1)$-decoupling identity. Since $\alpha_1$ can only offset one sub-current at one time, when we consider the terms without sub-currents after the permutation $\tau$, we only need to consider the terms with at most one sub-current before the permutation $\tau$. And there are only three external lines in the sub-currents by the $U(1)$-decoupling identity. Thus we have the equation,
\begin{equation}\label{b1}
  \begin{split}
  &\Sl_{i=2}^{r-1}\Sl_{\tau_1\in OP(\{\alpha_1\}\bigcup\{\alpha_i,\alpha_{i+1}\})} \left(\frac{1}{2F^2}\right)^{\frac{r+s-3}{2}}S_{div\{\mathbf{\alpha_r},\mathbf{\beta_s}\}}\mathcal{J}(\alpha_{2})\cdots \mathcal{J}(\alpha_{1}\alpha_{i}\alpha_{i+1})\cdots \mathcal{J}(\alpha_r)\mathcal{J}(\beta_{1})\cdots \mathcal{J}(\beta_s)\\
  &+\Sl_{\tau_1}\left(\left(\frac{1}{2F^2}\right)^{\frac{r+s-1}{2}}S_{0}\mathcal{J}(\alpha_{1})\mathcal{J}(\alpha_2)\cdots \mathcal{J}(\alpha_r)\mathcal{J}(\beta_{1})\cdots \mathcal{J}(\beta_s)\right)\\
  =&\Sl_{j=1}^{s-1}\Sl_{\tau_2\in OP(\{\alpha_1\}\bigcup\{\beta_j,\beta_{j+1}\})} \left(\frac{1}{2F^2}\right)^{\frac{r+s-3}{2}}S_{div\{\mathbf{\alpha_r},\mathbf{\beta_s}\}}\mathcal{J}(\alpha_{2})\cdots \mathcal{J}(\alpha_r)\mathcal{J}(\beta_{1})\cdots \mathcal{J}(\alpha_1\beta_{j}\beta_{j+1})\cdots \mathcal{J}(\beta_s)\\
  &+\Sl_{\tau_2}\left(\left(\frac{1}{2F^2}\right)^{\frac{r+s-1}{2}}S_{0}\mathcal{J}(\alpha_{2})\cdots \mathcal{J}(\alpha_r)\mathcal{J}(\alpha_1)\mathcal{J}(\beta_{1})\cdots \mathcal{J}(\beta_s)\right).
  \end{split}
\end{equation}
Since the sub-currents are all the same, we can work with the coefficient matrices independently. Let us denote the corresponding part by,
\begin{equation}\label{}
  \boldsymbol{\mathrm{X}_{\alpha}^{(r-s)}}(r,s)+\boldsymbol{\mathrm{D}_{\alpha}^{(r-s)}}(r,s)
  =\boldsymbol{\mathrm{Y}_{\alpha}^{(r-s)}}(r,s)+\boldsymbol{\mathrm{Z}_{\alpha}^{(r-s)}}(r,s)
\end{equation}

What we can calculate directly is $\boldsymbol{\mathrm{X}_{\alpha}^{(r-s)}}(r,s)$, $\boldsymbol{\mathrm{Y}_{\alpha}^{(r-s)}}(r,s)$ and $\boldsymbol{\mathrm{Z}_{\alpha}^{(r-s)}}(r,s)$. Then we can obtain $\boldsymbol{\mathrm{D}_{\alpha}^{(r-s)}}(r,s)$, with an undecided first row. According to our recursion hypothesis, only when $r-s=-1,1,3$ both sides do not vanish. We will calculate the most difficult case $r-s=1$ in appendix. The calculation of $\boldsymbol{\mathrm{D}_{\beta}^{(r-s)}}(r,s)$ is almost the same, and we will also leave it in the appendix.

The form of the matrices $\boldsymbol{\mathrm{D}^{(r-s)}}(r,s)$ are given directly as following,
\begin{equation*}
\boldsymbol{\mathrm{D}_{\alpha}^{(-1)}}(r,s)=
\begin{pmatrix}
0 & 1 & 2 & \cdots & r-2 & r-1 & r\\
0 & r-1 & r & \cdots & r & r & r\\
0 & 0 & r-2 & \cdots & r & r & r\\
\vdots & \vdots & \vdots & \ddots & \vdots & \vdots & \vdots\\
0 & 0 & 0 & \cdots & 2 & r & r\\
0 & 0 & 0 & \cdots & 0 & 1 & r\\
\end{pmatrix},
\boldsymbol{\mathrm{D}_{\beta}^{(-1)}}(r,s)=
\begin{pmatrix}
s-1 & 1 & s & \cdots & s & s & s\\
s-2 & 0 & 2 & \cdots & s & s & s\\
\vdots & \vdots & \vdots & \ddots & \vdots & \vdots & \vdots\\
3 & 0 & 0 & \cdots & s-3 & s & s\\
2 & 0 & 0 & \cdots & 0 & s-2 & s\\
1 & 0 & 0 & \cdots & 0 & 0 & s-1\\
\end{pmatrix}.
\end{equation*}
\begin{equation*}
\boldsymbol{\mathrm{D}_{\alpha}^{(1)}}(r,s)=
\begin{pmatrix}
1 & 2 & \cdots & r-3 & r-2 & r-1\\
r-1 & r & \cdots & r & r & r\\
0 & r-2 & \cdots & r & r & r\\
\vdots & \vdots & \ddots & \vdots & \vdots & \vdots\\
0 & 0 & \cdots & 3 & r & r\\
0 & 0 & \cdots & 0 & 2 & r\\
0 & 0 & \cdots & 0 & 0 & 1\\
\end{pmatrix},
\boldsymbol{\mathrm{D}_{\beta}^{(1)}}(r,s)=
\begin{pmatrix}
s & s & s & \cdots & s & s\\
s-1 & 1 & s & \cdots & s & s\\
s-2 & 0 & 2 & \cdots & s & s\\
\vdots & \vdots & \vdots & \ddots & \vdots & \vdots\\
2 & 0 & 0 & \cdots & s-2 & s\\
1 & 0 & 0 & \cdots & 0 & s-1\\
0 & 0 & 0 & \cdots & 0 & 0\\
\end{pmatrix}.
\end{equation*}
\begin{equation}\label{}
  \boldsymbol{\mathrm{D}_{\alpha}^{(i)}}(r,s)=\mathbf{0},\ \boldsymbol{\mathrm{D}_{\beta}^{(i)}}(r,s)=\mathbf{0},\
  \mathrm{when}\ i\neq\pm 1.
\end{equation}
Now it is easy to verify that our form of $\mathbf{C}(r,s)$ satisfies all the equations, so it is a special solution which can be denoted as $c_s^{ij}$. However, we need to prove it is the unique solution. Since $\mathbf{R}$ is of rank $r-1$, (\ref{R}) restricts the $rs$-dimensional $\mathbf{C}(r,s)$ to a $s$-dimensional space. Furthermore, (\ref{S}) restricts the $s$-dimensional space to a $1$-dimensional space. So, we only have one parameter to adjust. What is it? It is nothing but the 'n' in (\ref{g}). So we have the solution for (\ref{R}) and (\ref{S})
\begin{equation}\label{}
  c^{ij}=c_g^{ij}+c_s^{ij}=n(-1)^{i+j}{r-1\choose i-1}{s-1\choose j-1}+c_s^{ij}.
\end{equation}

We need the following lemma to decide $n$:
\begin{lemma}\label{$c^{r1}=0$}
$c^{r1}=0$.
\end{lemma}
\begin{proof}
Consider the whole coefficient $p_1^2s_{\alpha_r\beta_1}$, expand $p_1^2$, so if there are not any diagrams and their sums can provide $s_{\alpha_r\beta_1}s_{\alpha_r\beta_1}$ without sub-currents, $c^{r1}$ is zero.
\begin{figure}[htbp]
  \centering
  \includegraphics[width=0.8\textwidth]{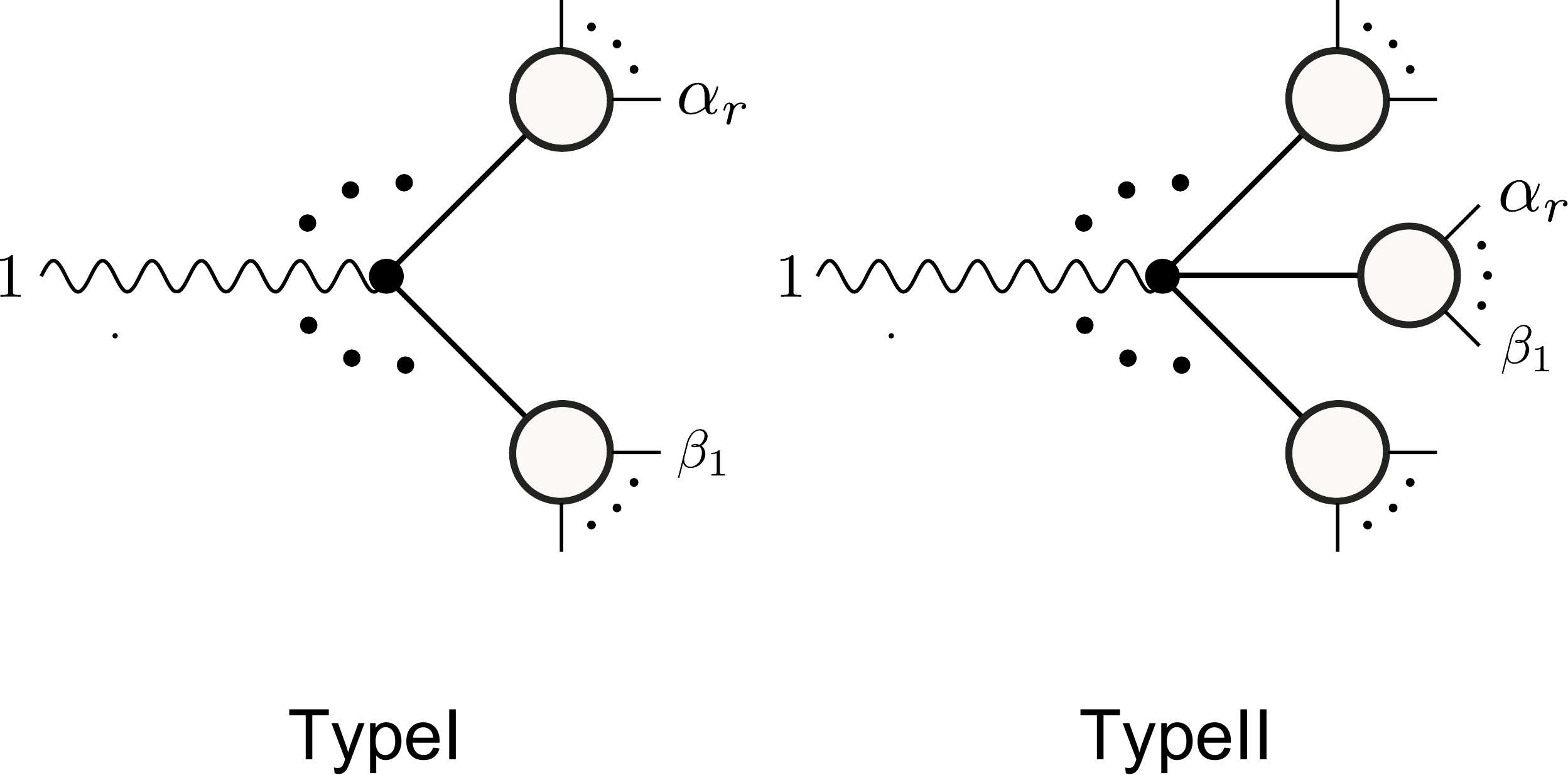}\\
  \caption{Two types of diagrams that may provide $s_{\alpha_r\beta_1}s_{\alpha_r\beta_1}$.}\label{310}
\end{figure}

There are two cases which may provide $s_{\alpha_r\beta_1}s_{\alpha_r\beta_1}$, the diagrams are given in Fig.\ref{310}. In the first one, $\alpha_r$ and $\beta_1$ are not in the same sub-current which is attached to $p_1$. In these diagrams, external line $\beta_1$ must be behind $\alpha_r$ in order to provide $s_{\alpha_r\beta_1}$. Thus the external lines or sub-currents containing them must be adjacent. However, according to the Feynman rule, the amplitudes do not contain momenta of adjacent lines attach to $p_1$ directly, so the other part of the whole coefficients can not provide $s_{\alpha_r\beta_1}$. The first case is killed.

The second case is that $\alpha_r$ and $\beta_1$ are in the same sub-currents. The amplitudes may be $(\cdots +p_{\alpha_r}+p_{\beta_1}+ \cdots)^2$ and provide $s_{\alpha_r\beta_1}$. However, according to the recursion hypothesis, we can not obtain $s_{\alpha_r\beta_1}$ because it does not have $\beta_j$ can make $\cdots \mathcal{J}(\beta_j)\mathcal{J}(\alpha_r)\mathcal{J}(\beta_1) \cdots$  or $\alpha_i$ can make $\cdots \mathcal{J}(\alpha_r)\mathcal{J}(\beta_1)\mathcal{J}(\alpha_i) \cdots$ in the rule of reading coefficients.

In one word, $s_{\alpha_r\beta_1}$ can appear either in the BCJ coefficient or in the vertex, but cannot in both. So we cannot obtain $s_{\alpha_r\beta_1}s_{\alpha_r\beta_1}$ in terms without sub-currents, which means $c^{r1}=0$.
\end{proof}
So
\begin{equation}\label{}
  c_g^{r1}=c^{r1}-c_s^{r1}=0,
\end{equation}
which means $n=0$. So $c_s^{ij}$ is just the unique solution.


\end{proof}
\section{Conclusions}

In this work, we proved the general BCJ relation in tree-level nonlinear sigma model with one external line off-shell under Cayley parametrization by proposing and proving its equivalent formula, the revised BCJ relation. The transform formula between these two relations are also given. The diagram representation of the result also gives a beautiful explanation of the previous results of KK relation in \cite{Chen:2013fya,Chen:2014dfa}. The permutation sum in the revised BCJ relation in nonlinear sigma model gives zero in the on-shell case, while in the off-shell case it is no longer zero. This off-shell permutation sum equals to a summation of sub-current products with the BCJ coefficients under a specific ordering. Besides, our proof is completely recursive, and thus has a minimal dependence of the character of nonlinear sigma model. Both the conciseness of the result and minimal model dependence indicate that this result may not be limited to nonlinear sigma model. For future work, the algebraic interpretation of these relations and the loop-level extensions are also deserved.

\appendix
\section{Calculation of the Matrices $\mathbf{D}^{(r-s)}(r,s)$}
In this part we will calculate the coefficient matrices of the term without sub-currents after $\alpha_1$ or $\beta_1$ is permuted among $\mathbf{\alpha_{r-1}}$ or $\mathbf{\beta_{s-1}}$ respectively. The basic idea of this proof is that we can view one amplitude by two methods, and thus we can obtain an equation to solve the part we want. The calculation is based on the result of the generalized $U(1)$-decoupling identity, which is proven in \cite{Chen:2014dfa}, and revised BCJ relation with less external lines, or the same external lines but less number of $\alpha$'s. A result which is frequently used in this proof is that the order of finite summations can be changed.

\subsection{$\alpha_1$ is permuted among $\mathbf{\alpha_{r-1}}$}
We discuss the case for $r-s=1$ in detail. Let us calculate $\boldsymbol{\mathrm{X}_{\alpha}^{(1)}}(r,s)$ first. Considering the $U(1)$-decoupling identity,
\begin{equation}\label{}
  \Sl_{\tau\in OP(\{\alpha_1\}\bigcup\{\alpha_i,\alpha_{i+1}\})} \mathcal{J}(\alpha_1\alpha_i\alpha_{i+1})=\frac{1}{2F^2}\mathcal{J}(\alpha_i)\mathcal{J}(\alpha_1)\mathcal{J}(\alpha_{i+1}),
\end{equation}
we have
\begin{equation}
\boldsymbol{\mathrm{X}_{\alpha}^{(1)}}(r,s)
=
\Sl_{k=2}^{r-1}
\begin{pmatrix}
* & * & \cdots & * & * & *\\
0 & 1 & \cdots & 1 & 1 & 1\\
0 & 0 & \cdots & 1 & 1 & 1\\
\vdots & \vdots & \ddots & \vdots & \vdots & \vdots\\
0 & 0 & \cdots & 1 & 1 & 1\\
0 & 0 & \cdots & 0 & 1 & 1\\
0 & 0 & \cdots & 0 & 0 & 1\\
\end{pmatrix}\\
=
\begin{pmatrix}
* & * & \cdots & * & * & *\\
0 & r-2 & \cdots & r-2 & r-2 & r-2\\
0 & 1 & \cdots & r-2 & r-2 & r-2\\
\vdots & \vdots & \ddots & \vdots & \vdots & \vdots\\
0 & 0 & \cdots & r-1 & r-2 & r-2\\
0 & 0 & \cdots & 0 & r-3 & r-2\\
0 & 0 & \cdots & 0 & 0 & r-2\\
\end{pmatrix}.
\end{equation}
where in the first matrices,
\begin{equation*}\label{}
  c^{(i+1)i}=\left\{
  \begin{array}{ll}
  0,\ if\ i<k \\
  1,\ if\ i\geqslant k
  \end{array}
  \right..
\end{equation*}

In a similar manner, we can calculate,
\begin{equation}
\boldsymbol{\mathrm{Y}_{\alpha}^{(1)}}(r,s)
=
\Sl_{k=1}^{s-1}
\begin{pmatrix}
* & * & \cdots & * & * & *\\
1 & 1 & \cdots & 1 & 1 & 1\\
0 & 1 & \cdots & 1 & 1 & 1\\
\vdots & \vdots & \ddots & \vdots & \vdots & \vdots\\
0 & 0 & \cdots & 0 & 1 & 1\\
0 & 0 & \cdots & 0 & 0 & 1\\
0 & 0 & \cdots & 0 & 0 & 0\\
\end{pmatrix}\\
=
\begin{pmatrix}
* & * & \cdots & * & * & *\\
r-2 & r-2 & \cdots & r-2 & r-2 & r-2\\
0 & r-3 & \cdots & r-2 & r-2 & r-2\\
\vdots & \vdots & \ddots & \vdots & \vdots & \vdots\\
0 & 0 & \cdots & 2 & r-2 & r-2\\
0 & 0 & \cdots & 0 & 1 & r-2\\
0 & 0 & \cdots & 0 & 0 & 0\\
\end{pmatrix}.
\end{equation}
where in the first matrices,
\begin{equation*}\label{}
  c^{(i+1)i}=\left\{
  \begin{array}{ll}
  1,\ if\ i\leqslant k \\
  0,\ if\ i>k
  \end{array}
  \right..
\end{equation*}

As for $\boldsymbol{\mathrm{Z}_{\alpha}^{(1)}}(r,s)$, we can calculate it directly,
\begin{equation}
\boldsymbol{\mathrm{Z}_{\alpha}^{(1)}}(r,s)
=
\Sl_{k=0}^{s}
\begin{pmatrix}
* & * & \cdots & * & * & *\\
0 & 1 & \cdots & 1 & 1 & 1\\
0 & 0 & \cdots & 1 & 1 & 1\\
\vdots & \vdots & \ddots & \vdots & \vdots & \vdots\\
0 & 0 & \cdots & 1 & 1 & 1\\
0 & 0 & \cdots & 0 & 1 & 1\\
0 & 0 & \cdots & 0 & 0 & 1\\
\end{pmatrix}\\
=
\begin{pmatrix}
* & * & \cdots & * & * & *\\
1 & r & \cdots & r & r & r\\
0 & 2 & \cdots & r & r & r\\
\vdots & \vdots & \ddots & \vdots & \vdots & \vdots\\
0 & 0 & \cdots & r-3 & r & r\\
0 & 0 & \cdots & 0 & r-2 & r\\
0 & 0 & \cdots & 0 & 0 & r-1\\
\end{pmatrix}.
\end{equation}
where in the first matrices,
\begin{equation*}\label{}
  c^{(i+1)i}=\left\{
  \begin{array}{ll}
  0,\ if\ i<k \\
  1,\ if\ i\geqslant k
  \end{array}
  \right..
\end{equation*}

Finally, we can get
\begin{equation}
\boldsymbol{\mathrm{D}_{\alpha}^{(1)}}(r,s)
=\boldsymbol{\mathrm{Y}_{\alpha}^{(1)}}(r,s)+\boldsymbol{\mathrm{Z}_{\alpha}^{(1)}}(r,s) -\boldsymbol{\mathrm{X}_{\alpha}^{(1)}}(r,s)\\
=\begin{pmatrix}
* & * & \cdots & * & * & *\\
r-1 & r & \cdots & r & r & r\\
0 & r-2 & \cdots & r & r & r\\
\vdots & \vdots & \ddots & \vdots & \vdots & \vdots\\
0 & 0 & \cdots & 3 & r & r\\
0 & 0 & \cdots & 0 & 2 & r\\
0 & 0 & \cdots & 0 & 0 & 1\\
\end{pmatrix},
\end{equation}
where the first row is undecided. The other two cases are similar.

\subsection{$\beta_1$ is permuted among $\mathbf{\beta_{s-1}}$}
The outline of this proof is that when a $\beta_1$ is permuted among $\mathbf{\beta_{s-1}}$, we can also think of it as the $\beta_1$ is permuted among a fixed order of $\mathbf{\alpha_r}\cup\mathbf{\beta_{s-1}}$, then sum up all the ordered permutation. We will calculate the terms without sub-currents from both points of view, compare them, and get the terms we want. In this part, we assume that $\tau_1\in OP(\beta_1\bigcup\mathbf{\beta_{s-1}})$ and $\tau_2\in OP(\{\beta_1\}\bigcup\{\mathbf{\alpha_r}\cup\mathbf{\beta_{s-1}}\})$ without other notations.

From the first point of view, which is the same as the running $\alpha$ case,
\begin{equation}\label{b1}
  \begin{split}
  &-\Sl_{\tau_1}\left(\Sl_{\sigma\in OP(\mathbf{\alpha_r}\bigcup\mathbf{\beta_s})}\left(\Sl_{i=1}^r\Sl_{\xi_{\sigma_k}>\xi_{\alpha_i}}s_{\alpha_i\sigma_k}\right)\mathcal{A}(1,\{\sigma\})\right)\\
  =&\Sl_{\tau_1}\left(\Sl_{div\{\mathbf{\alpha_r},\mathbf{\beta_s}\}}   \left(\frac{1}{2F^2}\right)^{\frac{R+S-1}{2}}S_{div\{\mathbf{\alpha_r},\mathbf{\beta_s}\}}\mathcal{J}(A_{1})\cdots \mathcal{J}(A_R)\mathcal{J}(B_{1})\cdots \mathcal{J}(B_S)\right).
  \end{split}
\end{equation}

From the second point of view, which is a little more tricky than the running $\alpha$ case, for each fixed order of $\mathbf{\alpha_r}\cup\mathbf{\beta_{s-1}}$, when $\beta_1$ is permuted among them, the coefficients are the same if we do not consider the coefficients  containing $\beta_1$(and fortunately we needn't consider it because it is not independent with other coefficients, as is seen in the transformation matrix $\mathbf{S}$). So we can calculate it with the $U(1)$-decoupling identity, then multiply it with the corresponding coefficients. Therefore we have
\begin{equation}\label{b1}
  \begin{split}
  &-\Sl_{\tau_1}\left(\Sl_{\sigma\in OP(\mathbf{\alpha_r}\bigcup\mathbf{\beta_s})}\left(\Sl_{i=1}^r\Sl_{\xi_{\sigma_k}>\xi_{\alpha_i}}s_{\alpha_i\sigma_k}\right)\mathcal{A}(1,\{\sigma\})\right)\\
  =&-\Sl_{\sigma\in OP(\mathbf{\alpha_r}\bigcup\mathbf{\beta_{s-1}})}\left(\left(\Sl_{i=1}^r\Sl_{\xi_{\sigma_k}>\xi_{\alpha_i}}s_{\alpha_i\sigma_k}\right)\Sl_{\tau_2}\mathcal{A}(1,\{\tau\})\right)
  \end{split}
\end{equation}
Using the result of the $U(1)$-decoupling identity, we can sum up the permutation $\tau_2$,
\begin{equation}\label{b2}
  =-\Sl_{\sigma\in OP(\mathbf{\alpha_r}\bigcup\mathbf{\beta_{s-1}})}\left(\left(\Sl_{i=1}^r\Sl_{\xi_{\sigma_k}>\xi_{\alpha_i}}s_{\alpha_i\sigma_k}\right)\frac{1}{2F^2} \Sl_{div\{\mathbf{\alpha_r}\cup\mathbf{\beta_{s-1}}\}\rightarrow\{T_{1}\},\{T_{2}\}}\mathcal{J}(\{T_{1}\})\mathcal{J}(\{T_{2}\})\right),
\end{equation}
Which is illustrated in Fig.\ref{fig_b_1}.
\begin{figure}
  \centering
  \includegraphics[width=10cm]{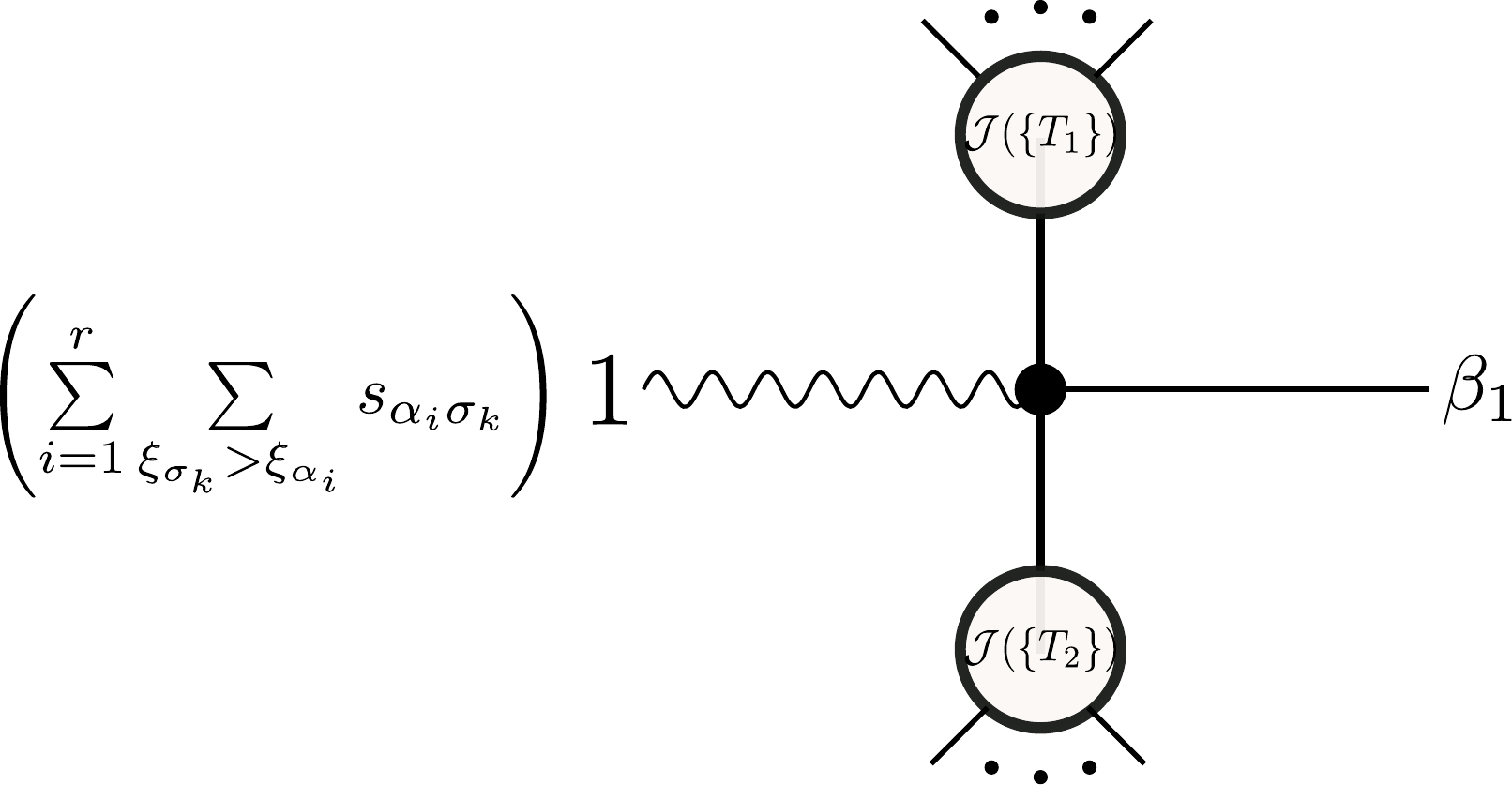}\\
  \caption{Schematic diagram of the result in $U(1)$-decoupling identity. After changing the order of summation, the summation of permutation of $\beta_1$ among $\mathbf{\alpha_r}\cup\mathbf{\beta_{s-1}}$ can be decided by the $U(1)$-decoupling identity.}\label{fig_b_1}
\end{figure}

Again, we can change the order of summation. However, $\mathbf{\alpha_r}\cup\mathbf{\beta_{s-1}}$ are divided into two parts $T_1$ and $T_2$, so $\sigma$ is composed of three permutations, namely permutation in $T_1$ ($\sigma_1$), permutation in $T_2$ ($\sigma_2$) and permutation between $T_1$ and $T_2$ ($\bar{\sigma}$). Meanwhile, we can divide the BCJ coefficients $\Sl_{i=1}^r\Sl_{\xi_{\sigma_k}>\xi_{\alpha_i}}s_{\alpha_i\sigma_k}$ into three parts as we do in (\ref{div_S}), $\Sl_{i=1}^r\Sl_{\xi_{\sigma_k}>\xi_{\alpha_i}}s_{\alpha_i\sigma_k}=S_1+S_2+\bar{S}$, where
\begin{equation}\label{b3}
  \begin{split}
  S_l=\Sl_{\alpha_i\in T_l}\Sl_{\substack{\sigma_k\in T_l\\\xi_{\sigma_k}>\xi_{\alpha_i}}}s_{\alpha_i\sigma_k},~\bar{S}=\Sl_{\alpha_i\in T_1}\Sl_{\sigma_k\in T_2}s_{\alpha_i\sigma_k}
  \end{split}
\end{equation}
Therefore,
\begin{equation}\label{b4}
  \begin{split}
  =&-\frac{1}{2F^2} \Sl_{div\{\mathbf{\alpha_r}\cup\mathbf{\beta_{s-1}}\}\rightarrow\{T_{1}\},\{T_{2}\}}\Sl_{\bar{\sigma}}\left(\left(\Sl_{\sigma_1}S_1\mathcal{J}(\{T_{1}\})\right)\left(\Sl_{\sigma_2}\mathcal{J}(\{T_{2}\})\right)\right)\\
  &-\frac{1}{2F^2} \Sl_{div\{\mathbf{\alpha_r}\cup\mathbf{\beta_{s-1}}\}\rightarrow\{T_{1}\},\{T_{2}\}}\Sl_{\bar{\sigma}}\left(\left(\Sl_{\sigma_1}\mathcal{J}(\{T_{1}\})\right)\left(\Sl_{\sigma_2}S_2\mathcal{J}(\{T_{2}\})\right)\right)\\
  &-\frac{1}{2F^2} \Sl_{div\{\mathbf{\alpha_r}\cup\mathbf{\beta_{s-1}}\}\rightarrow\{T_{1}\},\{T_{2}\}}\Sl_{\bar{\sigma}}\left(\bar{S}\left(\Sl_{\sigma_1}\mathcal{J}(\{T_{1}\})\right)\left(\Sl_{\sigma_2}\mathcal{J}(\{T_{2}\})\right)\right)\\
  \end{split}
\end{equation}
With the result of generalized $U(1)$-decoupling identity and the hypothesis of revised BCJ relation of less external lines, we can sum up all the $\sigma_1$ and $\sigma_2$. For example,
\begin{equation*}\label{}
  \Sl_{\sigma_1}\mathcal{J}(\{T_{1}\})=\Sl_{div\{\alpha_i,\beta_j\in T_1\}}   \left(\frac{1}{2F^2}\right)^{\frac{R_1+S_1-1}{2}}\mathcal{J}(A_{1}^{(T_1)})\cdots \mathcal{J}(A_{R_1}^{(T_1)})\mathcal{J}(B_{1}^{(T_1)})\cdots \mathcal{J}(B_{S_1}^{(T_1)})
\end{equation*}
\begin{equation*}\label{}
  \Sl_{\sigma_1}S_1\mathcal{J}(\{T_{1}\})
  =\Sl_{div\{\alpha_i,\beta_j\in T_1\}}   \left(\frac{1}{2F^2}\right)^{\frac{R_1+S_1-1}{2}}S_{div\{\alpha_i,\beta_j\in T_1\}}\mathcal{J}(A_{1}^{(T_1)})\cdots \mathcal{J}(A_{R_1}^{(T_1)})\mathcal{J}(B_{1}^{(T_1)})\cdots \mathcal{J}(B_{S_1}^{(T_1)})
\end{equation*}
where $R_1-S_1=\pm1$. Since they are of the same form except for the coefficients, we can combine them together,
\begin{equation}\label{b5}
  \begin{split}
  =&-\frac{1}{2F^2} \Sl_{div\{\mathbf{\alpha_r}\cup\mathbf{\beta_{s-1}}\}\rightarrow\{T_{1}\},\{T_{2}\}}\Sl_{\bar{\sigma}}\left(S_{div\{\alpha_i,\beta_j\in T_1\}}+S_{div\{\alpha_i,\beta_j\in T_2\}}+\bar{S}\right)\\
  &\left(\Sl_{div\{\alpha_i,\beta_j\in T_1\}}   \left(\frac{1}{2F^2}\right)^{\frac{R_1+S_1-1}{2}}\mathcal{J}(A_{1}^{(T_1)})\cdots \mathcal{J}(A_{R_1}^{(T_1)})\mathcal{J}(B_{1}^{(T_1)})\cdots \mathcal{J}(B_{S_1}^{(T_1)})\right)\\
  &\left(\Sl_{div\{\alpha_i,\beta_j\in T_2\}}   \left(\frac{1}{2F^2}\right)^{\frac{R_2+S_2-1}{2}}\mathcal{J}(A_{1}^{(T_2)})\cdots \mathcal{J}(A_{R_2}^{(T_2)})\mathcal{J}(B_{1}^{(T_2)})\cdots \mathcal{J}(B_{S_2}^{(T_2)})\right).\\
  \end{split}
\end{equation}

Now, we can get the equation,
\begin{equation}\label{b1}
  \begin{split}
  &\Sl_{\tau_1}\left(\Sl_{div\{\mathbf{\alpha_r},\mathbf{\beta_s}\}}   \left(\frac{1}{2F^2}\right)^{\frac{R+S-1}{2}}S_{div\{\mathbf{\alpha_r},\mathbf{\beta_s}\}}\mathcal{J}(A_{1})\cdots \mathcal{J}(A_R)\mathcal{J}(B_{1})\cdots \mathcal{J}(B_S)\right)\\
  =&-\frac{1}{2F^2} \Sl_{div\{\mathbf{\alpha_r}\cup\mathbf{\beta_{s-1}}\}\rightarrow\{T_{1}\},\{T_{2}\}}\Sl_{\bar{\sigma}}\left(S_{div\{\alpha_i,\beta_j\in T_1\}}+S_{div\{\alpha_i,\beta_j\in T_2\}}+\bar{S}\right)\\
  &\left(\Sl_{div\{\alpha_i,\beta_j\in T_1\}}   \left(\frac{1}{2F^2}\right)^{\frac{R_1+S_1-1}{2}}\mathcal{J}(A_{1}^{(T_1)})\cdots \mathcal{J}(A_{R_1}^{(T_1)})\mathcal{J}(B_{1}^{(T_1)})\cdots \mathcal{J}(B_{S_1}^{(T_1)})\right)\\
  &\left(\Sl_{div\{\alpha_i,\beta_j\in T_2\}}   \left(\frac{1}{2F^2}\right)^{\frac{R_2+S_2-1}{2}}\mathcal{J}(A_{1}^{(T_2)})\cdots \mathcal{J}(A_{R_2}^{(T_2)})\mathcal{J}(B_{1}^{(T_2)})\cdots \mathcal{J}(B_{S_2}^{(T_2)})\right)\\.
  \end{split}
\end{equation}
From which we can easily get the equation of terms without sub-currents,
\begin{equation}\label{b6}
  \begin{split}
  &\Sl_{j=2}^{s-1}\Sl_{\tau_1\in OP(\{\beta_1\}\bigcup\{\beta_j,\beta_{j+1}\})} \left(\frac{1}{2F^2}\right)^{\frac{r+s-3}{2}}S_{div\{\mathbf{\alpha_r},\mathbf{\beta_s}\}}\mathcal{J}(\alpha_{1})\cdots \mathcal{J}(\alpha_r)\mathcal{J}(\beta_{2})\cdots \mathcal{J}(\beta_1\beta_{j}\beta_{j+1})\cdots \mathcal{J}(\beta_s)\\
  &+\Sl_{\tau_1}\left(\left(\frac{1}{2F^2}\right)^{\frac{r+s-1}{2}}S_{0}\mathcal{J}(\alpha_{1})\cdots \mathcal{J}(\alpha_r)\mathcal{J}(\beta_{1})\cdots \mathcal{J}(\beta_s)\right)\\
  =&-\frac{1}{2F^2} \Sl_{div\{\mathbf{\alpha_r}\cup\mathbf{\beta_{s-1}}\}\rightarrow\{T_{1}\},\{T_{2}\}}\Sl_{\bar{\sigma}}\left(S_0^{(T_1)}+S_0^{(T_2)}+\bar{S}\right)\\
  &\left(\left(\frac{1}{2F^2}\right)^{\frac{r_1+s_1-1}{2}}\mathcal{J}(\alpha_{1}^{(T_1)})\cdots \mathcal{J}(\alpha_{r_1}^{(T_1)})\mathcal{J}(\beta_{1}^{(T_1)})\cdots \mathcal{J}(\beta_{s_1}^{(T_1)})\right)\\
  &\left(\left(\frac{1}{2F^2}\right)^{\frac{r_2+s_2-1}{2}}\mathcal{J}(\alpha_{1}^{(T_2)})\cdots \mathcal{J}(\alpha_{r_2}^{(T_2)})\mathcal{J}(\beta_{1}^{(T_2)})\cdots \mathcal{J}(\beta_{s_2}^{(T_2)})\right),\\
  \end{split}
\end{equation}
Since the sub-currents are all the same, we can work with the coefficient matrices independently. Let us denote the corresponding part by,
\begin{equation}\label{}
  \boldsymbol{\mathrm{X}_{\beta}^{(r-s)}}(r,s)+\boldsymbol{\mathrm{D}_{\beta}^{(r-s)}}(r,s)
  =\Sl_{r_1-s_1=\pm1}\boldsymbol{\mathrm{Y}_{\beta}^{(r_1-s_1,r_2-s_2)}}(r,s).
\end{equation}
where $r_1-s_1=\pm1$, $r_2-s_2=\pm1$. Thus considering $\beta_1$, the only diagrams whose term without sub-current do not vanish satisfy $r-s=-3,-1,1$.

We discuss the case $r-s=-3$ in detail. Let us calculate $\boldsymbol{\mathrm{X}_{\beta}^{(-3)}}(r,s)$ first. Considering the $U(1)$-decoupling identity,
\begin{equation}\label{}
  \Sl_{\tau\in OP(\{\beta_1\}\bigcup\{\beta_j,\beta_{j+1}\})} \mathcal{J}(\beta_1\beta_{j}\beta_{j+1})=\frac{1}{2F^2}\mathcal{J}(\beta_{j})\mathcal{J}(\beta_1)\mathcal{J}(\beta_{j+1}),
\end{equation}
we have
\begin{equation}
\mathbf{\mathrm{X}_{\beta}^{(-3)}}(r,s)
=
\Sl_{k=2}^{s-1}
\begin{pmatrix}
* & 0 & 1 & 1 & \cdots & 1 & 1 & 1\\
* & 0 & 0 & 1 & \cdots & 1 & 1 & 1\\
\vdots & \vdots & \vdots & \vdots & \ddots & \vdots & \vdots & \vdots\\
* & 0 & 0 & 0 & \cdots & 0 & 1 & 1\\
* & 0 & 0 & 0 & \cdots & 0 & 0 & 1\\
\end{pmatrix}\\
=
\begin{pmatrix}
* & 0 & s-3 & s-2 & \cdots & s-2 & s-2 & s-2\\
* & 0 & 0 & s-4 & \cdots & s-2 & s-2 & s-2\\
\vdots & \vdots & \vdots & \vdots & \ddots & \vdots & \vdots & \vdots\\
* & 0 & 0 & 0 & \cdots & 2 & s-2 & s-2\\
* & 0 & 0 & 0 & \cdots & 0 & 1 & s-2\\
\end{pmatrix}.
\end{equation}
where in the first matrices,
\begin{equation*}\label{}
  c^{(i-2)i}=\left\{
  \begin{array}{ll}
  1,\ if\ i\leqslant k \\
  0,\ if\ i>k
  \end{array}
  \right..
\end{equation*}

In this case, the only non-vanishing $\boldsymbol{\mathrm{Y}_{\beta}^{(r_1-s_1,r_2-s_2)}}(r,s)$ is $\boldsymbol{\mathrm{Y}_{\beta}^{(-1,-1)}}(r,s)$, so
\begin{equation}
\boldsymbol{\mathrm{Y}_{\beta}^{(-1,-1)}}(r,s)
=
\Sl_{k=2}^{s-1}
\begin{pmatrix}
* & 0 & 1 & 1 & \cdots & 1 & 1 & 1\\
* & 0 & 0 & 1 & \cdots & 1 & 1 & 1\\
\vdots & \vdots & \vdots & \vdots & \ddots & \vdots & \vdots & \vdots\\
* & 0 & 0 & 0 & \cdots & 0 & 1 & 1\\
* & 0 & 0 & 0 & \cdots & 0 & 0 & 1\\
\end{pmatrix}\\
=
\begin{pmatrix}
* & 0 & s-3 & s-2 & \cdots & s-2 & s-2 & s-2\\
* & 0 & 0 & s-4 & \cdots & s-2 & s-2 & s-2\\
\vdots & \vdots & \vdots & \vdots & \ddots & \vdots & \vdots & \vdots\\
* & 0 & 0 & 0 & \cdots & 2 & s-2 & s-2\\
* & 0 & 0 & 0 & \cdots & 0 & 1 & s-2\\
\end{pmatrix}.
\end{equation}
where in the first matrices,
\begin{equation*}\label{}
  c^{(i-2)i}=\left\{
  \begin{array}{ll}
  1,\ if\ i\leqslant k \\
  0,\ if\ i>k
  \end{array}
  \right..
\end{equation*}
Finally, we can get
\begin{equation}
\boldsymbol{\mathrm{D}_{\beta}^{(-3)}}(r,s)
=\boldsymbol{\mathrm{Y}_{\beta}^{(-1,-1)}}(r,s)-\boldsymbol{\mathrm{X}_{\beta}^{(-3)}}(r,s)\\
=\begin{pmatrix}
* & 0 & 0 & 0 & \cdots & 0 & 0 & 0\\
* & 0 & 0 & 0 & \cdots & 0 & 0 & 0\\
\vdots & \vdots & \vdots & \vdots & \ddots & \vdots & \vdots & \vdots\\
* & 0 & 0 & 0 & \cdots & 0 & 0 & 0\\
* & 0 & 0 & 0 & \cdots & 0 & 0 & 0\\
\end{pmatrix},
\end{equation}
where the first column is undecided. Other two cases are similar.

\newpage
\begin{figure}
  \centering
  \includegraphics[width=12cm]{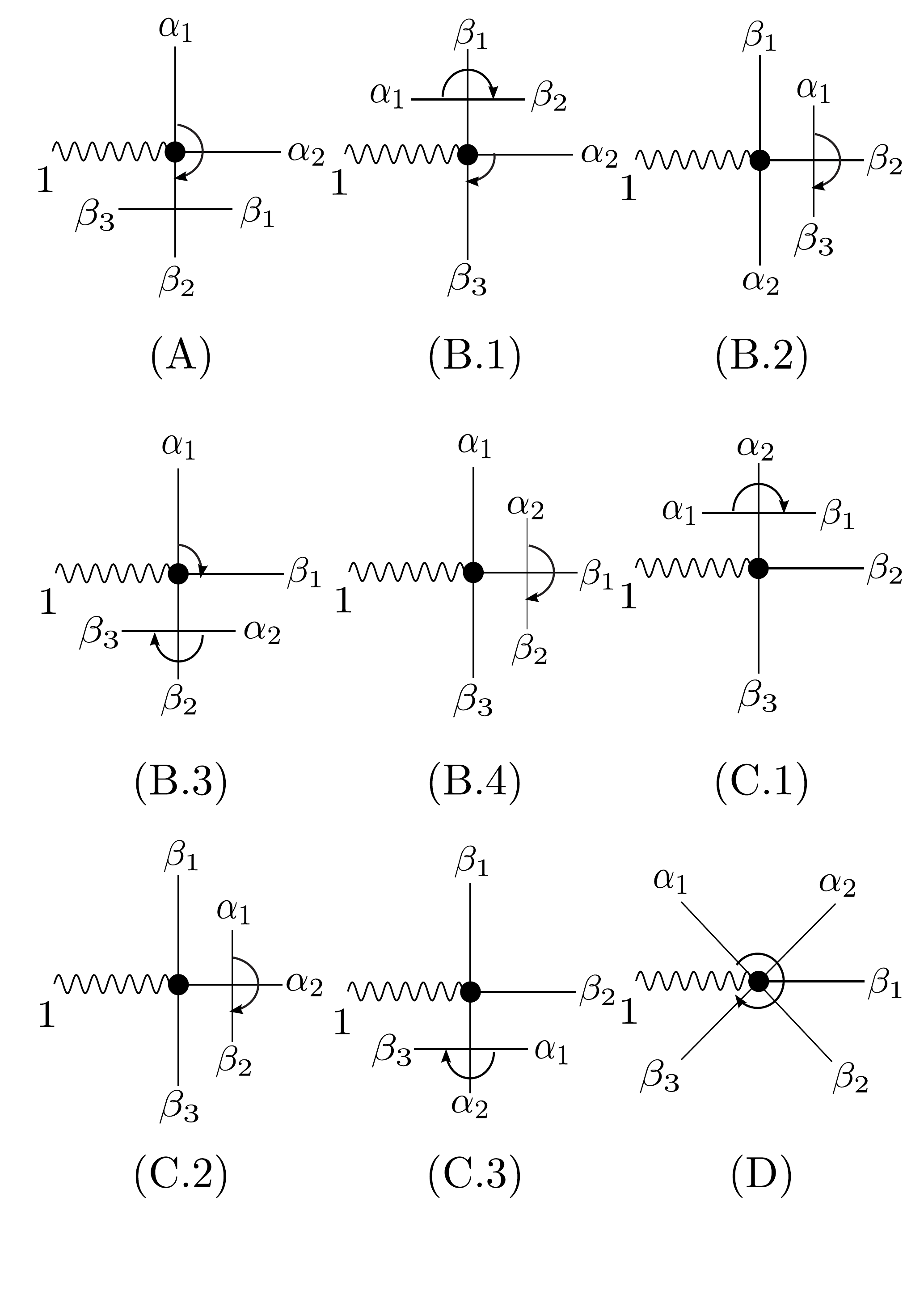}\\
  \caption{Diagrams of six-point currents, the arrow means permutation of $\alpha$ and $\beta$ at that vertex.}\label{example}
\end{figure}

\section*{Acknowledgments}
We thank  Y.J. Du for useful comments and kind suggestions.  This work has been supported by the Fundamental Research Funds for the Central Universities under contract 020414340080, NSF of China Grant under contract 11405084, the Open Project Program of State Key Laboratory of Theoretical Physics, Institute of Theoretical Physics, Chinese Academy of Sciences, China (No.Y5KF171CJ1) and the Jiangsu Ministry of Science and Technology under contract BK20131264.  We also thank Y. Gao, T. Han for hospitality and Key Laboratory of Theoretical Physics for hosting. 
\bibliographystyle{JHEP}
\addcontentsline{toc}{section}{References}
\bibliography{Refs}

\end{document}